\newtheorem{theorem}{Theorem}
\newtheorem{proposition}{Proposition}\newtheorem{lemma}{Lemma}\newtheorem{corollary}{Corollary}
\newtheorem{remark}{Remark}
\newcommand{\dfn}{\stackrel{\triangle}{=}}
\def\dv{\boldsymbol{d}}
\def\Zc{\mathcal{Z}}
\def\Tau{\mathcal{T}}
\def\Tau{\mathcal{T}}
\newcommand{\defeq}{\triangleq}\newcommand{\Expect}{{\rm I\kern-.3em E}}
\newtheorem{example}{{\em Example}}
\newcommand\smallV{
  \mathchoice
    {{\scriptstyle\mathcal{V}}}
    {{\scriptstyle\mathcal{V}}}
    {{\scriptscriptstyle\mathcal{V}}}
    {\scalebox{.7}{$\scriptscriptstyle\mathcal{O}$}}
  }
\begin{document}
\title{Full Coded Caching Gains for Cache-less Users}

\author{Eleftherios Lampiris and Petros Elia
\thanks{
E. Lampiris is with the Electrical Engineering and Computer Science Department, Technische Universit\"at Berlin, 10587 Berlin, Germany, email: lampiris@tu-berlin.de. The work was conducted while E. Lampiris was employed by EURECOM.

P. Elia is with the Communication Systems Department at EURECOM, Sophia Antipolis, 06410, France email: elia@eurecom.fr.

The work is supported by the European Research Council under the EU Horizon 2020 research and innovation program / ERC grant agreement no. 725929. (ERC project DUALITY).

Parts of this work were presented in ITW 2018 \cite{lampirisCachelessITW}.}
}
\maketitle

\begin{abstract}\nocite{maddah2014fundamental}
Within the context of coded caching, the work reveals the interesting connection between having multiple transmitters and having heterogeneity in the cache sizes of the receivers. Our work effectively shows that having multiple transmit antennas  -- while providing full multiplexing gains -- can also simultaneously completely remove the performance penalties that are typically associated to cache-size unevenness.
Focusing on the multiple-input single-output Broadcast Channel, the work first identifies the performance limits of the extreme case where cache-aided users coincide with users that do not have caches, and then expands the analysis to the case where both user groups are cache-aided but with heterogeneous cache-sizes. In the first case, the main contribution is a new algorithm that employs perfect matchings on a bipartite graph to offer full multiplexing as well as \emph{full} coded-caching gains to \emph{both} cache-aided as well as cache-less users.
An interesting conclusion is that, starting from a single-stream centralized coded caching setting with normalized cache size $\gamma$, then adding $L$ antennas allows for the addition of {up to} approximately $L/\gamma$ extra \emph{cache-less} users, at no added delay costs.
Similarly surprising is the finding that, {beginning} with a single-antenna hybrid system (with both cache-less and cache-aided users), then adding {$L-1$} antennas to the transmitter, as well as endowing the cache-less users with a cumulative normalized cache size $\Gamma_2$, increases the Degrees of Freedom by a \emph{multiplicative} factor of up to $\Gamma_{2}+L$.
\end{abstract}

\begin{IEEEkeywords}
	Caching networks, coded caching, heterogeneous cache sizes, delivery rate, uncoded cache placement, index coding, MISO broadcast channel, network coding.
\end{IEEEkeywords}

\section{Introduction}

Coded caching is a technique --- first introduced in~\cite{maddah2014fundamental} for the single-stream bottleneck broadcast channel (BC) --- that exploits receiver-side caches in order to deliver cacheable content to many users at a time. This technique initially involved a setting where a single-antenna transmitter has access to a library of $N$ files, and serves (via a single bottleneck link) $K$ receivers, each having a cache of size equal to the size of $M$ files. The process involved a novel cache placement method and a subsequent delivery phase during which each user simultaneously requests one library file, while the transmitter employs cache-dependent coded multicasting to simultaneously deliver independently requested content to many users at a time.

In the single stream setting ($L=1$ transmit antenna), where the bottleneck link has capacity equal to $1$ file per unit of time, the work in~\cite{maddah2014fundamental} showed that any set of $K$ simultaneous requests can be served with \emph{normalized} delay (worst-case completion time, guaranteeing the delivery of any set of requested files) which is at most
\begin{align}
	T_{L=1}(K,\gamma) = \frac{K(1-\gamma)}{1+K\gamma},
\end{align}
where $\gamma \defeq \frac{M}{N} \in[0,1) $ denotes the normalized cache size.
This implied the ability to treat $K\gamma+1$ cache-aided users at a time; a number that is often referred to as the cache-aided sum \emph{Degrees of Freedom} (DoF)
\begin{align}
	D_{L=1}(K,\gamma) \defeq \frac{K(1-\gamma)}{T_{1}(K,\gamma)} = 1+K\gamma,
\end{align}
corresponding to a caching gain of $K\gamma$ additional served users due to caching.

\subsection{Multi-antenna coded caching}
Recently, coded caching has been explored in the presence of multiple antennas/transmitters. In the context of a fully-connected multiple-input single-output (MISO) BC, multi-antenna ($L$ antennas) techniques were combined with coded caching to reveal the new insight that multiplexing and caching gains can be combined additively \cite{shariatpanahiMultiServerTransIT,naderializadehFundamentalTransIT} to yield a sum-DoF of $D_{L}(K,\gamma)=L+K\gamma$, {which performance is shown in \cite{lampirisResolvingArXiv} to be \textit{exactly} optimal under the assumptions of uncoded placement and one-shot linear transmission schemes.}

{Another line of research has sought to ameliorate some of the bottlenecks appearing in coded caching through the use of multiple antennas/transmitters. Such results have shown that multiple antennas can i) dramatically reduce subpacketization, and in fact allow for multiplicative DoF gains in the finite file-size regime~\cite{lampirisSubpacketizationJSAC}, ii) achieve the full multi-antenna coded caching gains with feedback cost that is a function of the number of antennas and not of the caching gain~\cite{lampirisResolvingArXiv,lampirisCsitISIT,lampirisSubpacketizationCsitSPAWC}, iii) significantly improve the performance in the finite Signal-to-Noise-Ratio region \cite{ShariatpanahiPhysicalLayer2019TransIT,tolli2017multi,DMTlampiris2019Asilomar}, iv) augment the performance of coded caching through the knowledge of file popularity at the transmitter side~\cite{augmentingSerbetciWiOpt2020} as well as other insights (cf.\cite{bayat2018achieving,ngoScalableTransWireless2018,zhangTransIT2017,yiISIT2016tologogical,senguptaCISScache2016,caoTransWirelessFundamental2017,roigInterferenceICC2017,ZFE:15,zhangTopologicalISIT2017,tangResolvabilityISIT2016,lampirisNoCsitISIT,parrinello2018coded}, etc.).
 
The above synergistic nature of combining multiple antennas with coded caching also extends to ameliorating the effects of cache-size heterogeneity, as we show in this work.}

\subsection{Coded caching with heterogeneous cache sizes}\label{parHetCachesLiterature}

While the first works on coded caching focused more on the setting where users have identically sized caches, in reality many communication systems may include users with heterogeneous caching capabilities. It is expected, for example, that users with different types of devices, such as laptops and mobile phones which, naturally, have different storage constraints, will be simultaneously active. Moreover, different users may well decide to allocate different amounts of their available storage (or none of it) for content caching.

These uneven storage constraints can conceivably hamper the performance of coded caching systems. For example, imagining a system that needs to treat cache-aided and cache-less users, we can see that when users request different content, transmitting coded messages to cache-aided users can preclude the cache-less users from receiving any valuable information.

The expectation that users with uneven cache constraints might co-exist, has sparked a number of recent works that sought to ameliorate the effects of cache size unevenness \cite{senguptaLayeredAsilomar2016,amiriDecentralizedUnevenCachesTransComm2017,cao2018coded,ibrahimCentralizedHeterogeneousCachesWCNC2017,ibrahimD2DheterogeneousICC2018,daniel2017optimization,asadi2018centralized,ibrahimHeterogeneousTransComm2019}. For example, the work in \cite{senguptaLayeredAsilomar2016} adopts the approach of splitting the caches into multiple layers while caching at each layer according to the algorithm of \cite{maddah2014fundamental}, and adjusted to the users' cache sizes and to the size allocated to each cache layer. Further, the work in \cite{amiriDecentralizedUnevenCachesTransComm2017} considered the uneven cache-size scenario under decentralized placement\footnote{The main idea behind decentralized placement is to circumvent the fact that the identity of users needs to be known during the placement phase. Thus, storing of content at the users migrates from the deterministic placement introduced in \cite{maddah2014fundamental} to a random placement.}. Moreover, the work in \cite{ibrahimD2DheterogeneousICC2018} explores the Device-to-Device setting with heterogeneous caches, while \cite{cao2018coded} investigates the fundamental limits of the single-stream Coded Caching with $2$ users and uneven cache sizes. The works in \cite{daniel2017optimization,ibrahimHeterogeneousTransComm2019,asadi2018centralized} view the heterogeneous cache problem as a set of optimization constraints, where the size of each conceivable subfile is optimized in order to reduce the transmission time. Another idea can be found in \cite{ibrahimCodedPlacementAsilomar18} which adopts a coded placement approach in order to further increase the coded caching gains in the heterogeneous cache setting.

While for the homogeneous case we know that the scheme in~\cite{maddah2014fundamental} is optimal under the assumption of uncoded placement \cite{wanOptimalityITW2016} (and approximately optimal \cite{yuFactorOf2TransIT2019} under any placement scheme), the optimal performance of the heterogeneous cache-size setting is not known in general. What is known though, from all the above works, is that cache-size heterogeneity has always entailed performance penalties compared to the homogeneous case.

\subsection{Current setting, and brief summary of contributions}

In the current setting, we study the role of multiple antennas in tackling the penalties associated with having heterogeneous caches. Specifically, we first consider a system where a set of $K_{1}$ users are each assisted by caches of some normalized size $\gamma_{1}>0$, while the remaining $K_{2}=K-K_{1}$ users are cache-less ($\gamma_{2}=0$). In the single-antenna case we will show that, under the assumption of uncoded placement, the \textit{optimal strategy} is to treat each set separately, thus revealing that a single-stream system is severely penalized by the presence of cache-less users.

Motivated by the above, we will then shift our focus to the study of the multiple antenna case ($L$ antennas), where we will show that for a wide range of parameters, we are able to simultaneously treat both user types with DoF $D_{L}(K_{1},\gamma_{1},K_{2},\gamma_{2}=0)=K_{1}\gamma_{1}+L$ that will be equal to that of the corresponding homogeneous setting. Moreover, for the other case, where the DoF of the homogeneous system cannot be achieved, we will show that the DoF performance is $L$ times higher than in the single antenna case, and we will show that this performance is \emph{exactly optimal} under the assumption of uncoded cache placement.

We will then proceed to explore how the performance is boosted when now the $K_2 = K-K_1$ users of the second group are each endowed with a cache of normalized size $\gamma_{2}\in (0, \gamma_{1})$. First, we will prove that the total DoF of $L+K_{1}\gamma_{1}+K_{2}\gamma_{2}$ can be achieved for a broad range of parameters. Further, for the case when this gain cannot be achieved, we will show that the same performance boost experienced in the cache-less case when adding $L-1$ antennas (by a multiplicative factor of $L$) can be also achieved by adding caches to the cache-less group. Specifically, starting from the single antenna setting with the cache-aided and cache-less user sets, then adding a cumulative cache equal to $\Gamma_{2}$ at the cache-less group and $L-1$ antennas we can achieve a \textit{multiplicative} DoF boost by a factor of up to $\Gamma_{2}+L$. The above results reveal the power of multiple antennas in significantly or entirely removing the negative effects of cache-size unevenness, as well as the powerful effect that modest amounts of caching can have in uneven scenarios.

Again we stress that the above binary scenarios have particular practical pertinence. In the first scenario, the cache-less users may reflect users that employ legacy devices that do not support cache-aided decoding or that may wish to opt-out of dedicating their storage for caching (cf. \cite{paschos2016CommMag}).
The second scenario with two distinct cache sizes $\gamma_{1},\gamma_{2}$ ($\gamma_{1}>\gamma_{2}> 0$) reflects the expectation that users are split between those that have laptop devices that can store more information, and those with mobile devices which generally have more serious storage constraints.

\subsection*{System Model}
The goal of this work is to study the DoF performance of the $L$-antenna MISO BC\footnote{We note that, while here we focus on the MISO BC, the results can be readily extended to the wired multiserver setting of \cite{shariatpanahiMultiServerTransIT}, where a set of $L$ servers are connected to $K$ users and where the transmitted messages are linearly combined to form a full rank matrix between servers and users. The results can also be readily extended to the multiple transmitter interference setting where each of the $K_{T}$ transmitters stores fraction $\gamma_{T}$ of the content, such that $K_{T}\gamma_{T}=L$ (cf.~\cite{naderializadehFundamentalTransIT}). Thus, all our results can be trivially applied in the multiple-transmitter/multi-server settings.} with $K$ single-antenna receiving users, which are equipped with caches of heterogeneous sizes. Specifically, we focus on a system where a user belongs to one of two sets; the $K_{1}$ users of set $\mathcal{K}_{1}$ are endowed with caches of normalized size $\gamma_{1}\in(0,1)$, while each of the remaining $K_2=K-K_{1}$ users of set $\mathcal{K}_{2}$ have caches of normalized size $\gamma_{2}\in [0,\gamma_{1})$. Each user simultaneously asks for a single -- different -- file, from a library of $N\ge K$ files, thus the metric of interest is the worst-case delivery time. In order to satisfy the users' demands, the base station transmits an $L\times 1$ vector $\mathbf{x}$. Then the signal at each receiver $k$ takes the form
\begin{align*}
	y_{k}=\mathbf{h}^{H}_{k} \mathbf{x} + w_{k},~~  k\in\{1,2,..., K\}\triangleq[K]
\end{align*}
where $\mathbf{h}_{k}\in\mathbb{C}^{L\times1}$ denotes the channel between the transmitter and receiver $k$, where $\mathbf{x}$ satisfies the power constraint $\mathbb{E}\{\|\mathbf{x}\|^2\}=P$, and where $w_{k}\thicksim \mathbb{C}\mathcal{N}(0,1)$ corresponds to the noise observed at user $k$. We assume that each {user} has all {the} necessary channel-state information, and that for a given {Signal-to-Noise-Ratio} (SNR) {value}, each link has capacity of the form $\log(SNR)+o(log(SNR))$.

Our aim is to design, for the heterogeneous system at hand, a pre-fetching and delivery algorithm that minimizes the worst-case completion time
\begin{align}
	T_L(K_1,\gamma_{1},K_2,\gamma_{2})
\end{align}
corresponding to each user asking for a different file.

\section{Main results}
We begin with the case where users of set $\mathcal{K}_{2}$ are cache-less ($\gamma_{2}=0$), and then we will generalize by treating the case where $\gamma_2 \in (0, \gamma_{1})$. {We emphasize that we focus on the cases where $K_1\gamma_{1}, K_{2}\gamma_{2}$ are integers, while the non-integer cases can be achieved using memory sharing (cf. \cite{maddah2014fundamental}).}

\subsection{Coexistence of cache-aided and cache-less users \label{sec:Cacheless}}
We start with a result that exemplifies --- in the single stream case of $L=1$ --- the problem with having cache-aided users coexisting with cache-less users. We will use notation
\begin{align}
	T_{K_{i}} \defeq \frac{K_{i}(1-\gamma_{i})}{1+K_{i}\gamma_{i}}
\end{align}
to describe the delay needed to serve, in the single antenna setting, $K_i$ cache-aided users with caches of normalized size $\gamma_i$ (in the absence of any cache-less users) using $1$ transmit antenna, where this performance is exactly optimal under the assumption of uncoded cache placement.

\begin{proposition}\label{theSingleStream}
In a single-stream BC with $K_1$ cache-aided users equipped with caches of normalized size $\gamma_1$ and $K_2$ additional cache-less users, the optimal delay, under the assumption of uncoded placement, takes the form
\begin{align}
T_1(K_1,\gamma_1,K_2,\gamma_{2}=0)&=  \frac{K_1(1-\gamma_1)}{1+K_1\gamma_1}+ K_2.\label{eq:singleStream}
\end{align}
\end{proposition}
\begin{proof}
	{Achieving the above result is direct through the use of the algorithm of \cite{maddah2014fundamental} for the cache-aided users and then serving the cache-less users separately.
	The optimality part of the }proof is relegated to Appendix~\ref{proofSingleStreamPenalty}.
\end{proof}

The above reveals that in the single stream case, every time a single cache-less user is added, there is a delay penalty of an entire unit of time, thus {showing} that the two types of users {need to} be treated separately. If such separation were to be applied in the multi-antenna case, the achievable performance would be
\begin{equation}\label{eq:SeparatedL}
	T_{L}(K_{1},\gamma_1, K_{2},\gamma_{2}=0)=\frac{K_1(1-\gamma_1)}{L+K_1\gamma_1}+ \frac{K_2}{L}
\end{equation}
so the $K_{2}$ cache-less users would experience only a multiplexing gain of $L$, and would experience no caching gain, {while the DoF would be strictly less than $K_{1}\gamma_{1}+L$.}

We proceed with the main result of this work.

{
\begin{theorem}\label{thm:general}
	In the MISO BC with $L\ge1$ antennas, $K_{1}$ cache-aided users equipped with cache of fractional size $\gamma_1$, and $K_{2}$
	cache-less users, the achievable delivery time takes the form
	\begin{align}	\nonumber		
		T_{L}&(K_{1},\gamma_{1},K_{2},\gamma_{2}=0) =\\ \label{eqGeneralResult}
		&\begin{dcases}
			T_{K_{1}}+\frac{K_{2}-(L-1)T_{K_{1}}}{\min\{L, K_{2}\} } , & K_{2}  >(L-1)T_{K_{1}}
			\\
			\frac{K_2+K_1(1-\gamma_1)}{K_{1}\gamma_1+L}, & K_{2} \le(L-1)T_{K_{1}}.
		\end{dcases}
	\end{align}
The first case is within a multiplicative factor of $2$ from the optimal performance, while it is \textit{exactly} optimal under the assumptions of one-shot and linear schemes. Further,
the second case is within a multiplicative factor of $3$ from optimal under the assumption of linear and one-shot schemes.
\end{theorem}
\begin{proof}
	The achievability part of the proof is described in Section~\ref{sec:Description}. The outer bound {for the first case is detailed in Appendix~\ref{proofGapCacheless} (multiplicative factor of $2$) and Appendix~\ref{appendix2} (exact optimality under uncoded placement). The bound corresponding to the second case is detailed in Appendix~\ref{proofGapCacheless}.}
\end{proof}
}

{
\subsubsection*{Intuition}
The result of Theorem~\ref{thm:general} is divided in two cases relative to the number of cache-less users $K_{2}$. Beginning from the second case, which corresponds to complete mitigation of cache-size unevenness we can see that the number of cache-less users need to be less than $(L-1) T_{K_{1}}\approx \frac{L-1}{\gamma_{1}}$. A direct consequence of this scenario is that each antenna that is added to the system can serve up to $\frac{1}{\gamma_{1}}$ extra cache-less users without an increase on the delivery time (see Corollary~\ref{cor:Cor2}).

On the other hand, the first case of Theorem~\ref{thm:general} is achieved when the number of users is higher than the given threshold. As such, we can view this threshold as the limiting ability of a multi-antenna cache-aided system to serve cache-less users with the maximal DoF. It is interesting to note, as we also describe in the following corollary that, in this second case, the achieved performance is reduced by a factor $L$ compared to the single antenna setting.

}

\begin{corollary}\label{thmLfoldBoost}
Starting from the single-antenna BC with $K_1$ cache-aided users with caches of normalized size $\gamma_1$ and $K_{2}=(\tilde{L}-1)T_{K_{1}}$ cache-less users (for any positive $\tilde{L}$), then going from $1$ to $L\le \tilde{L}$ antennas, reduces delay by $L$ times, to a delay that is optimal under the assumption of uncoded cache placement.
\end{corollary}
\begin{proof}
The calculation of the performance comes directly from {Proposition}~\ref{theSingleStream} and {Theorem}~\ref{thm:general}.
\end{proof}

{
Corollary~\ref{thmLfoldBoost} shows the impact of equipping cache-aided networks with multiple antennas. While both resources provide significant reduction in the communication time, it is a joint utilization of both resources that can provide such an increase.

}

{Moreover}, we note that the above multiplicative boost in the DoF is in contrast to the additive DoF boost (additive multiplexing gain) experienced in systems with only cache-aided users~\cite{shariatpanahiMultiServerTransIT}.

Let us proceed with {some further} corollaries that explore some of the ramifications of the above theorem. Equation~\eqref{eq:SeparatedL} helps us place the following corollary into context.
\begin{remark} \label{remarkCor1}
In the $L$-antenna, $(K_{1},\gamma_{1},K_{2},\gamma_{2}=0)$ MISO BC with $K_2 \leq (L-1)T_{K_{1}}$, all cache-aided and cache-less users can experience full multiplexing gain $L$ as well as full caching gain $K_1\gamma_1$.
\end{remark}
{
Remark~\ref{remarkCor1} outlines the main difference between systems with one antenna compared to systems with multiple antennas. Furthermore, the remark brings forth the synergistic nature between multiple antennas and coded caching.

}

\begin{example}
	Let us assume the setting with $K_{2}=2$ cache-less users, and $K_{1}=5$ cache-aided users each equipped with a cache of normalized size $\gamma_{1}=\frac{1}{5}$. Transmitting {using} one antenna can achieve {the} optimal delay, under the assumption of uncoded placement, of
	\begin{align}
		T_{1}\left(5,\frac{1}{5}, 2,0\right)=\frac{K_{1}(1-\gamma_{1})}{K_{1}\gamma_{1}+1}+K_{2}=4.
	\end{align}
Going from $L=1$ to $L=2$ antennas, reduces the delay by a factor of 2, to the delay
	\begin{align}\label{eqDelay5_1_2_0}
		T_{2}\left(5,\frac{1}{5}, 2,0\right)=\frac{K_{1}(1-\gamma_{1})+K_{2}}{K_{1}\gamma_{1}+2}=2
	\end{align}
which is optimal under the assumption of uncoded cache placement.
\end{example}

We proceed with {a} corollary which can be placed into context, by noting that in a system with $L$ antennas and $K_2$ \emph{cache-less} users, adding one more antenna would allow (without added delay costs) the addition of only a diminishing number of $\frac{K_2}{L}$ extra cache-less users.

\begin{corollary} \label{cor:Cor2}
Let us start from the single-stream BC with $K_1$ cache-aided users equipped with caches of normalized size $\gamma_1$. Then, adding an extra $L-1$ transmit antennas, allows for the addition of
\begin{align}
	K_{2}=(L-1)T_{K_{1}}\approx \frac{L-1}{\gamma_{1}}
\end{align}
cache-less users, at no added delay costs.
\end{corollary}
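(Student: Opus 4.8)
The plan is to read off the result as a direct specialization of Theorem~\ref{thm:general}, so the proof is essentially a substitution followed by an asymptotic simplification. First I would record the baseline: in the single-stream ($L=1$) system serving only the $K_1$ cache-aided users, the optimal delay is $T_{K_1}=\frac{K_1(1-\gamma_1)}{1+K_1\gamma_1}$. This is the quantity against which the phrase ``no added delay costs'' must be measured.

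Next I would invoke the achievability of Theorem~\ref{thm:general} with the specific choice $K_2=(L-1)T_{K_1}$, which sits exactly on the threshold separating the two regimes of the theorem. Since the condition $K_2\ge (L-1)T_{K_1}$ holds with equality, Eq.~\eqref{eq:generalResultLarge} applies and the correction term $\frac{K_2-(L-1)T_{K_1}}{\min\{L,K_2\}}$ vanishes identically. Hence $T_L(K_1,\gamma_1,K_2,\gamma_2=0)=T_{K_1}$: adding the $L-1$ antennas together with the $K_2=(L-1)T_{K_1}$ cache-less users leaves the delivery time unchanged at its single-stream value, which is precisely the claim of no added delay.

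It then remains to justify the approximation $K_2=(L-1)T_{K_1}\approx \frac{L-1}{\gamma_1}$. Here I would simply expand $T_{K_1}=\frac{K_1(1-\gamma_1)}{1+K_1\gamma_1}$ in the regime of practical interest where the caching gain $K_1\gamma_1$ is large (so the ``$+1$'' in the denominator is negligible) and $\gamma_1$ is small (so $1-\gamma_1\approx 1$); in this limit $T_{K_1}\approx \frac{K_1}{K_1\gamma_1}=\frac{1}{\gamma_1}$, and multiplying by $L-1$ gives the stated count.

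I do not anticipate a genuine obstacle, since the corollary is a one-line consequence of the master theorem; the only points requiring care are (i) spelling out the regime ($K_1\gamma_1\gg 1$ and $\gamma_1\ll 1$) under which the approximation $T_{K_1}\approx 1/\gamma_1$ is valid, and (ii) noting that $K_2=(L-1)T_{K_1}$ is in general non-integer, so that strictly one should read the count up to rounding, consistent with the approximate ($\approx$) phrasing of the statement.
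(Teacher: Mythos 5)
Your proposal is correct and follows exactly the paper's route: the paper proves this corollary by declaring it ``direct from Theorem~\ref{thm:general},'' and your substitution $K_{2}=(L-1)T_{K_{1}}$ into Eq.~\eqref{eq:generalResultLarge}, which kills the correction term and returns $T_{L}=T_{K_{1}}$, is precisely that direct application spelled out. Your added remarks on the asymptotic regime for $T_{K_{1}}\approx 1/\gamma_{1}$ and on the integrality of $K_{2}$ are sound and consistent with the paper's approximate phrasing.
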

\begin{proof} This is direct from Theorem \ref{thm:general}.
\end{proof}

The following takes another point of view and explores the benefits of injecting cache-aided users into legacy (cache-less) MISO BC systems. To put the following corollary into context, we recall that in a cache-less $L$ transmit-antenna MISO BC serving $K_2\geq L$ users, the optimal (normalized) delay is $\frac{K_{2}}{L}$.
\begin{corollary} \label{cor:StartMIMO}
In a MISO BC with $K_2\geq L$ cache-less users, introducing $K_1$ additional cache-aided users with $\gamma_1 \geq \frac{L}{K_2}$, incurs delay
\begin{align*}
	T_L(K_1,\gamma_1,K_2,\gamma_{2}=0) \leq \frac{K_2}{L-1}
\end{align*}
and thus we can add an infinite number of cache-aided users and only suffer a delay increase by a factor that is at most $\frac{L}{L-1}$.
\end{corollary}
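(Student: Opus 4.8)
The plan is to invoke Theorem~\ref{thm:general} directly, after verifying which of its two regimes applies. Since the corollary assumes $K_2 \ge L$, we have $\min\{L,K_2\}=L$, so the candidate expression is the branch \eqref{eq:generalResultLarge} (the one valid for $K_2 \ge (L-1)T_{K_1}$); the first task is therefore to confirm that the regime condition $K_2 \ge (L-1)T_{K_1}$ indeed holds under the hypothesis $\gamma_1 \ge \frac{L}{K_2}$.

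To do this, I would first record that $T_{K_1}=\frac{K_1(1-\gamma_1)}{1+K_1\gamma_1}$ is increasing in $K_1$ and is bounded above by its limiting value $\frac{1-\gamma_1}{\gamma_1}$; in particular $T_{K_1} < \frac{1}{\gamma_1}$ for every finite $K_1$. Combining this with the assumption $\gamma_1 \ge \frac{L}{K_2}$ (equivalently $\frac{1}{\gamma_1} \le \frac{K_2}{L}$) yields $T_{K_1} < \frac{K_2}{L}$, and hence $(L-1)T_{K_1} < \frac{(L-1)K_2}{L} < K_2$. This places us squarely in the branch \eqref{eq:generalResultLarge}, and -- crucially -- this chain of inequalities is uniform in $K_1$, which is exactly what will later let us send $K_1$ to infinity.

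With the regime settled, I would substitute $\min\{L,K_2\}=L$ into \eqref{eq:generalResultLarge} and simplify, collapsing the two terms into the compact form
\begin{align*}
	T_L(K_1,\gamma_1,K_2,\gamma_{2}=0) = \frac{K_2 + T_{K_1}}{L}.
\end{align*}
The desired bound $T_L \le \frac{K_2}{L-1}$ is then equivalent -- after cross-multiplying by the positive quantities $L$ and $L-1$ -- to the single inequality $(L-1)T_{K_1} \le K_2$, which is precisely what the previous step established (indeed strictly).

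Finally, for the ``infinite number of cache-aided users'' and the multiplicative-factor claims, I would emphasize that none of the bounds above depended on the size of $K_1$: the estimate $(L-1)T_{K_1} < K_2$ holds for all $K_1$ (with $T_{K_1}$ saturating at $\frac{1-\gamma_1}{\gamma_1}$ as $K_1\to\infty$), so $T_L \le \frac{K_2}{L-1}$ persists no matter how many cache-aided users are injected. Comparing against the cache-less baseline delay $\frac{K_2}{L}$ recalled just before the statement then gives a delay-increase factor of at most $\frac{K_2/(L-1)}{K_2/L}=\frac{L}{L-1}$. I do not anticipate a serious obstacle here: the only step requiring genuine care is the regime check, i.e.\ deducing $(L-1)T_{K_1}<K_2$ from the monotonicity and boundedness of $T_{K_1}$ together with $\gamma_1\ge \frac{L}{K_2}$; everything after that is a one-line algebraic simplification followed by a trivial inequality.
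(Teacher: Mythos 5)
Your proposal is correct and follows exactly the route the paper intends: the paper's proof is simply ``direct from Theorem~\ref{thm:general},'' and you have filled in the details faithfully — verifying via $T_{K_1}<\tfrac{1-\gamma_1}{\gamma_1}\le\tfrac{K_2}{L}$ that the regime $K_2\ge (L-1)T_{K_1}$ applies uniformly in $K_1$, simplifying \eqref{eq:generalResultLarge} to $\tfrac{K_2+T_{K_1}}{L}$, and reducing the claimed bound to $(L-1)T_{K_1}\le K_2$. No gaps.
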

\begin{proof} This is direct from Theorem~\ref{thm:general}.\end{proof}

\paragraph*{Multiple antennas for `balancing' cache-size unevenness}
In the variety of works (cf. Sec.~\ref{parHetCachesLiterature})
that explore the single-stream coded caching setting in the presence of uneven cache sizes, we see that having cache-size asymmetry induces delay penalties and that the preferred cache-size allocation is the uniform one. The following corollary addresses this issue, in the multi-antenna setting.

\begin{corollary} \label{cor:uneven}
The $L$-antenna, $(K_1,\gamma_{1},K_2,\gamma_{2}=0)$ MISO BC with $K_2\le (L-1)T_{K_{1}}$ cache-less users, incurs the same achievable delay
\begin{align*}
	T_{L}(K_1,\gamma_{1},K_2,\gamma_{2}=0) = \frac{K_2+K_1(1-\gamma_{1})}{L+K_1\gamma_{1}} = \frac{K(1-\gamma_{\text{av}})}{L+K\gamma_{\text{av}}}
\end{align*}
as the order optimal homogeneous $K$-user MISO BC with homogeneous caches of normalized size $\gamma_{\text{av}}\!=\! \frac{K_1\gamma_{1}}{K}$ (same cumulative cache size $K_1\gamma_{1} = K\gamma_{\text{av}}$).
\end{corollary}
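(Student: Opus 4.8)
The plan is to establish the corollary as a two-step identity: the first equality is simply a restatement of the achievable delay already guaranteed by Theorem~\ref{thm:general}, and the second equality is a short algebraic manipulation that uses the ``equal cumulative cache'' bookkeeping. Since we are operating in the regime $K_2 \le (L-1)T_{K_{1}}$, I would invoke Eq.~\eqref{eq:generalResultSmall} directly to write $T_{L}(K_1,\gamma_{1},K_2,\gamma_{2}=0) = \frac{K_2 + K_1(1-\gamma_1)}{L + K_1\gamma_1}$, which is exactly the middle expression in the statement. No fresh achievability argument is required here; this delay is inherited wholesale from the theorem, and the corollary's role is merely to reinterpret it.

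For the second equality, I would first fix notation by recalling $K = K_1 + K_2$ and setting $\gamma_{\text{av}} \defeq K_1\gamma_1/K$, so that the cumulative cache is preserved, i.e.\ $K\gamma_{\text{av}} = K_1\gamma_1$. The only external fact I need is that the order-optimal homogeneous $K$-user $L$-antenna MISO BC with uniform normalized cache $\gamma_{\text{av}}$ achieves sum-DoF $L + K\gamma_{\text{av}}$ (the additive multiplexing-plus-caching gain, cf.~\cite{shariatpanahiMultiServerTransIT,naderializadehFundamentalTransIT}), and hence normalized worst-case delay $\frac{K(1-\gamma_{\text{av}})}{L+K\gamma_{\text{av}}}$. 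I would then substitute the cumulative-cache identity into this expression.

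The computation itself is immediate: the denominator becomes $L + K\gamma_{\text{av}} = L + K_1\gamma_1$, matching the denominator of the middle expression, while the numerator becomes $K - K\gamma_{\text{av}} = (K_1+K_2) - K_1\gamma_1 = K_1(1-\gamma_1) + K_2$, matching its numerator. Equating the two therefore yields the claimed chain of equalities. I would close by stressing that the content of this corollary is conceptual rather than technical --- the ``hard part'' is not any calculation but the interpretation: distributing the cumulative cache $K_1\gamma_1$ unevenly (concentrating all of it on the $K_1$ users while leaving $K_2$ users entirely cache-less) costs nothing in delay relative to spreading the \emph{same} total cache uniformly across all $K$ users, provided $K_2 \le (L-1)T_{K_{1}}$. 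This is precisely the sense in which multiple antennas neutralize the penalty normally associated with cache-size unevenness, and no further argument beyond the substitution above is needed to make it rigorous.
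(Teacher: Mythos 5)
Your proposal is correct and matches the paper's own treatment, which simply declares the corollary ``direct from Theorem~\ref{thm:general}'': you invoke Eq.~\eqref{eq:generalResultSmall} for the achievable delay and then verify the identity $K(1-\gamma_{\text{av}})=K_1(1-\gamma_1)+K_2$ and $L+K\gamma_{\text{av}}=L+K_1\gamma_1$ by substituting $K\gamma_{\text{av}}=K_1\gamma_1$ and $K=K_1+K_2$. Nothing further is needed.
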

\begin{proof} This is direct from Theorem~\ref{thm:general}.\end{proof}

\begin{example}
	Let us assume the $(K_{1}=5, \gamma_{1}=1/5, K_{2}=2,\gamma_{2}=0)$ MISO BC setting with $L=2$ antennas. The performance of this setting, as shown in a previous example (cf.~\eqref{eqDelay5_1_2_0}), is $T_{2}=2$.

	This matches the performance of the $L=2$ antenna homogeneous system with $K=7$ users and $\gamma=1/7$, whose delay is again (cf. \cite{shariatpanahiMultiServerTransIT,naderializadehFundamentalTransIT})
	\begin{align}
		T_{2}\left(7,\frac{1}{7}\right)=\frac{K(1-\gamma)}{L+K\gamma}=2.
	\end{align}

\end{example}

\subsection{Coexistence of users with different cache sizes}
We now proceed to lift the constraint of cache-less users and consider the more general scenario of $\gamma_2 \in(0,\gamma_{1})$.

{
\begin{theorem}\label{the2typesPositiveGammas}
	In the $L$-antenna $(K_{1},\gamma_{1},K_{2},\gamma_{2}>0)$ MISO BC, 
	the achievable delivery time takes the form
	\begin{align}\label{eq2typesPositiveGammasLarge}\nonumber
		&T_{L}(K_{1},\gamma_{1},K_{2},\gamma_{2})=\\
		&\begin{dcases}
			 \frac{K_{1}(1-\gamma_{1})+ K_{2}(1-\gamma_{2})}{L+K_{1}\gamma_{1}+K_{2}\gamma_{2}}, & T_{K_{1}}\ge \frac{K_{2}(1-\gamma_{2})}{L-1+K_{2}\gamma_{2}}\\
			 T_{K_{1}}+
			\frac{K_{2}(1-\gamma_{2})-(L-1+K_{2}\gamma_{2})\cdot T_{K_{1}}}{\min\{K_{2}, L+K_{2}\gamma_{2}\}}, & \text{else}.
		\end{dcases}	
	\end{align}
\end{theorem} }
\begin{proof}
	The proof is constructive and is detailed in Section~\ref{secCacheAidedScheme}.
\end{proof}

\begin{remark}
Theorems \ref{thm:general} and \ref{the2typesPositiveGammas} show how adding either one {additional antenna} or {equipping} the cache-less users {with caches}, {would result in} the same increase {of} the DoF. Most importantly, by either increasing the number of antennas or increasing the caches of the weaker users helps to decrease the penalty due to the system heterogeneity and allows to achieve the homogeneous performance. As an example, let us assume the $L$-antenna $(K_{1},\gamma_{1}, L\cdot T_{K_{1}},\gamma_{2}=0)$ MISO BC where the delay is given by~\eqref{eqGeneralResult} as
	\begin{align}
		T_{L}(K_{1},\gamma_{1},L\cdot T_{K_{1}},0)= T_{K_{1}}+ \frac{T_{K_{1}}}{L}=T_{K_{1}}\frac{L+1}{L}.
	\end{align}

We {want to} show how increasing either the number of antennas by $1$, or adding a small cache to each of the cache-less users such that $K_2\gamma_{2}=1$, results in the same DoF performance. First, increasing the number of antennas to $L+1$, corresponds to the case described by~\eqref{eqGeneralResult} where
\begin{align}
	T_{L+1}\left(K_{1},\gamma_{1},K_{2},0\right)= T_{K_{1}}
\end{align}
corresponding to a DoF of
	\begin{align}
		D_{L+1}(K_{1},\gamma_{1},L\cdot T_{K_{1}},0)= L+1+K_{1}\gamma_{1}
	\end{align}
	
Further, in the initial setting with $L$ transmit antennas, adding a small cache to each of the cache-less users such that $K_{2}\gamma_{2}=1$, we can easily see that the achieved performance corresponds to~\eqref{eq2typesPositiveGammasLarge}, thus
\begin{align}\nonumber
	T_{L}\left(K_{1},\gamma_{1},L\cdot T_{K_{1}}, \frac{1}{L\cdot T_{K_{1}}}\right)&= \frac{K_{1}(1-\gamma_{1})+ K_{2} -1}{L+K_{1}\gamma_{1}+1}
\end{align}
which corresponds to the cache-aided DoF of
\begin{align}
	D_{L}\left( K_{1}, \gamma_{1}, L\cdot T_{K_{1}}, \frac{1}{L\cdot T_{K_{1}}}\right) = L+K_{1}\gamma_{1}+1.
\end{align}

\end{remark}

From the above remark, we can see that adding antennas or small caches to the cache-less users allows for the full DoF to be achieved. In other words, we can see that the two resources work in tandem to boost the DoF.

Further, the above multiplicative gains can also be achieved in a setting with $K_{1}$ users equipped with caches of normalized size $\gamma_{1}$ which coexist with some $K_{2}=(\tilde{L}-1)T_{K_{1}},~\tilde{L}>1$ cache-less users, by increasing the transmit antennas and/or adding cumulative cache of size $\Gamma_{2}$ to the cache-less users. Specifically, adding these two resources to the system can raise the DoF by a multiplicative factor of $L+\Gamma_{2}\le \tilde{L}$. As we will see, this is a direct outcome of exploiting the multiple antennas as a means of spatially separating users and hence treating \emph{in the same transmission} both user types.

\begin{example}
	Let us assume a single antenna system with $K_{2}=10$ cache-less users and $K_{1}=7$ cache-aided users equipped with caches of normalized size $\gamma_{1}=\frac{1}{7}$. First, we will calculate the performance of the above setting and then we will proceed to add one more antenna, i.e. $L'=2$ and finally, we will add caches to the cache-less users.
	
The first setting's performance can be calculated by~\eqref{eq:singleStream} to be
	\begin{align}
		T_{1}\left(7, 1/7, 10,0\right)=\frac{7-1}{2}+10=13
	\end{align}
while the second setting's performance, given by~\eqref{eqGeneralResult}, is
	\begin{align}
		T_{2}\left(7, 1/7, 10,0\right)=\frac{7-1}{2}+\frac{7}{2}=\frac{13}{2}.
	\end{align}
Finally, the third setting's performance is given by~\eqref{eq2typesPositiveGammasLarge}
	\begin{align}
		T_{2}\left(7, 1/7, 10,1/10\right)=\frac{7-1}{2}+\frac{3}{3}=4.
	\end{align}
	
	From the above we can see that doubling the number of antennas will halve the system delay. Furthermore, if we also equip cache-less users with caches of cumulative size $\Gamma_{2}=1$, while having $L'=2$ antennas, we can see that the delay is reduced by more than a multiplicative factor of $3$, compared to the original setting, which amounts to a multiplicative DoF boost of $L'+\Gamma_{2}=3$ and a further additive reduction attributed to the local caching gain.
\end{example}

\section{Scheme Description}\label{sec:Description}

We begin with the scheme description for the case where the cache-aided users co-exist with the cache-less users. This scheme will then serve as the basis for the scheme for the case where both user types have non-trivial cache sizes.

In both of these cases, the challenge of the algorithm lies in properly combining the delivery of content towards each of the two types of users, such that subfiles intended for one set are either ``cacheable'' or can be ``nulled-out'' via Zero-Force (ZF) precoding.

\paragraph*{Notation}

In describing the scheme, we will use the following notation. The file requested by user $k\in[K]$ will be denoted by $W^{d_{k}}$. Symbol $\oplus$ denotes the bit-wise XOR operator, $\mathbb{N}$ the set of natural numbers, and for $n,k\in\mathbb{N}$ we denote with $\binom{n}{k}$ the binomial coefficient. For set $A$ we denote its cardinality with $|A|$, while for sets $A,B$ we will use $A\setminus B$ to denote the difference set.
We denote with $\mathcal{H}^{-1}_{\lambda}$ the normalized inverse ($L\times L$ precoder matrix) of the channel matrix between the $L$ antennas and the $L$ users of some set $\lambda\subset[K]$, $|\lambda|=L$, where the rows of the precoder matrix are denoted by $\big\{\mathbf{h}^{\perp}_{\lambda\setminus\{l\}}\big\}_{l\in\lambda}$ and have the following property 
\begin{align}
	\mathbf{h}^{H}_{k}\cdot \mathbf{h}^{\perp}_{\lambda\setminus\{l\}}=
	\begin{cases}
		1,	& \text{ if } k=l\\
		0,	& \text{ if } k\in \lambda\setminus \{l\}\\
		\ne 0,	& \text{ if } k\in [K]\setminus \lambda.
	\end{cases}
\end{align}

We remind that for some $K_{i}= \{K_{1}, K_{2}\}$, we denote by
\begin{align}
	T_{K_{i}} \triangleq \frac{K_{i}(1-\gamma_{i})}{1+K_{i}\gamma_{i}},~~ i\in\{1,2\}
\end{align}
the delay required to treat only one set of users with a single transmit antenna.

Finally for sets $\chi, \beta \subset [K]$, we define XORs $X_{\chi}$ and $X_{\chi,\beta}$ as follows
\begin{align}\label{eqSimpleXOR}
	X_{\chi}&=\bigoplus_{k\in\chi} W^{d_{k}}_{\chi\setminus\{k\}}\\ \label{eqTwoTypeXOR}
	X_{\chi,\beta}&=\bigoplus_{k\in\chi} W^{d_{k}}_{\beta\cup\chi\setminus\{k\}}.
\end{align}

{
Before describing the algorithm, we present one instance of the transmission that allows the simultaneous serving of both the cache-aided and cache-less users. 

\begin{example}
Let us assume the $L=2$-antenna MISO BC, where $K_1=5$ users have caches of normalized size $\gamma_1 = \frac{1}{5}$, while $K_2=2$ users have no caches. For this setting Algorithm~\ref{alg:Delivery} allows us to simultaneously serve $K_{1}\gamma_{1}+1 =2$ cache-aided users and $L-1=1$ cache-less user, while we note that the example in its entirety is presented in Section~\ref{secCachelessExample}. 

A single transmission of Algorithm~\ref{alg:Delivery} aimed at serving users $1,2$ and $6$ takes the form
	\begin{equation}
		\mathbf{x}_{1,2}^{1}=\mathbf{h}^{\perp}_{6} A_{2}B_{1} + \mathbf{h}^{\perp}_{2} F_{1}
	\end{equation}
We can see that the above transmission serves cache-aided users $1,2$ through XOR $A_{2}B_{1}$ and at the same time serves cache-less user $6$ with subfile $F_{1}$. We note that for some file $W\in \{A, B,F\}$ its subfile $W_{k}$ is cached at user $k$, a process that follows closely from the algorithm of \cite{maddah2014fundamental} and which we present in detail in Section~\ref{secPlacementDeliveryCacheless}.
 
The received message $y_{k}$ at each of the users takes the form
\begin{align}
	y_{1} &=  \mathbf{h}_{1}^{H} \mathbf{h}^{\perp}_{6} A_{2}B_{1} +  \mathbf{h}_{1}^{H} \mathbf{h}^{\perp}_{2} F_{1}\\
	y_{2} &=  \mathbf{h}_{2}^{H} \mathbf{h}^{\perp}_{6} A_{2}B_{1} +  \mathbf{h}_{2}^{H} \mathbf{h}^{\perp}_{2} F_{1} = \mathbf{h}_{2}^{H} \mathbf{h}^{\perp}_{6} A_{2}B_{1}\\
	y_{6} &=  \mathbf{h}_{6}^{H} \mathbf{h}^{\perp}_{6} A_{2}B_{1} +  \mathbf{h}_{6}^{H} \mathbf{h}^{\perp}_{2} F_{1} = \mathbf{h}_{6}^{H} \mathbf{h}^{\perp}_{2} F_{1}
\end{align}
where we have ignored the noise for simplicity.

User $1$ is receiving a linear combination of all three messages, and using its cache can remove the unwanted subfiles and decode its desired subfile $A_{2}$. Further, users $2,6$ are assisted by precoding and receive only their desired messages. Specifically, User $2$ needs to also use its cache in order to decode subfile $B_{1}$ while User $6$ can directly decode its desired $F_{1}$.

\end{example}

}

\subsection{Placement and delivery in the presence of cache-less users}\label{secPlacementDeliveryCacheless}
We will first provide an overview of the algorithm, and will then proceed to describe this algorithm in detail.

\subsubsection{Overview of Algorithm}The first scheme that we will present is designed to serve demands of both cache-aided and cache-less users in the same transmission. 
\paragraph{Placement Phase}The placement phase follows the algorithm in \cite{maddah2014fundamental} and starts by breaking each file into
\begin{align}
	S=\binom{K_{1}}{K_{1}\gamma_{1}}
\end{align}
subfiles and then proceeds with storing each subfile at exactly $K_{1}\gamma_{1}$ receivers.

\paragraph{Transmission Design} A transmitted vector is built by first forming an information vector of length $L$. The first element of this information vector is a XOR, intended for some $K_{1}\gamma_{1}+1$ cache-aided users, which is designed as in the algorithm of \cite{maddah2014fundamental} (cf.~\eqref{eqSimpleXOR}), and thus can be decoded by all involved cache-aided users. Further, the remaining messages are $L-1$ uncoded subfiles, each intended for a different cache-less user. These subfiles are carefully picked to match the file indices associated with the XOR.

Due to the inability of a cache-less user to remove interference, we need to Zero-Force the $L-1$ unwanted messages (the XOR as well as the $L-2$ other uncoded messages) to that user. Thus the XOR is Zero-Forced away from the cache-less users, while each uncoded message is Zero-Forced away from $L-2$ cache-less users and away from one cache-aided user.

As suggested above, thus far we have used-up all the $L-1$ spatial degrees of freedom for Zero-Forcing the XOR, but we have $1$ remaining spatial degree of freedom for each of the uncoded messages. This spatial degree of freedom will be used to ZF all the uncoded messages away from a single cache-aided user. This allows all cache-less users to be able to receive interference-free their intended message, as well as allows one cache-aided user to receive without interference the XORed message, which it can naturally decode. Most importantly, this last part will allow for the other $K_{1}\gamma_{1}$ cache-aided users to be able to have one subfile (per file) in common, which is crucial as we will see.

Since there are no remaining spatial degrees of freedom to exploit, the residual $K_{1}\gamma_{1}$ cache-aided users will receive a linear combination of all the $L$ symbols, and will need to exclusively use their cached content to remove these interfering messages. To this end, we need to pick the indices of the uncoded messages in a way such that these messages are completely known to these users. Since a subfile can be cached at a maximum of $K_{1}\gamma_{1}$ users and, conversely, a set of $K_{1}\gamma_{1}$ cache-aided users can have only one subfile (per file) in common, it follows that the subfile index that is delivered in every cache-less user is the same and is defined exactly by the $K_{1}\gamma_{1}$ cache-aided users that we are treating.

\paragraph{Verification that all subfiles are transmitted} As we saw in the previous paragraph, each transmission is responsible for communicating one XOR for a set $\chi$ of $K_{1}\gamma_{1}+1$ cache-aided users, as well as $L-1$ subfiles intended for cache-less users, where these subfiles share the same index $\tau$.

As mentioned before, in order for all $K_{1}\gamma_{1}$ users to decode their intended subfile, it suffices to choose the subfile index such that $\tau\subset\chi$. The pairing between a XOR and a subfile index can be viewed as a perfect matching problem over a bipartite graph. In this bipartite graph, a node of the left-hand-side represents one of the XORs ($\chi$), while a node of the right-hand-side represents $L-1$ subfiles with the same index ($\tau$) that are desired by some $L-1$ cache-less users.

As we will show later on, this problem is guaranteed to have a solution in our case. While such a solution can be constructed numerically, in the following paragraph we will describe an algorithm that can provide an explicit perfect matching to our problem for any set of parameters. This perfect matching can be achieved by a slight increase of the subpacketization.

\subsubsection{Algorithm Details} In this section we describe the details of the algorithm, starting from the placement phase. Further, we continue with the delivery algorithm in a pseudo-code format (Alg.~\ref{alg:Delivery}) accompanied by its description.

\paragraph{Placement Phase}

Initially, each file $W^{n},~n\in[N]$, is divided into
\begin{align}\label{eqSubpacketizaionCacheless}
	S_{nc}=K_{1}(1-\gamma_{1})\binom{K_{1}}{K_{1}\gamma_{1}}
\end{align}
subfiles, where these subfiles are named according to
\begin{align*}
	W^{n}\to \{W_{\tau}^{n,\phi},~\tau\subset\mathcal{K}_{1},~|\tau |=K_{1}\gamma_{1}, ~\phi\in\mathcal{K}_{1}\setminus\tau\}.
\end{align*}
Then, cache $\mathcal{Z}_{k}$ of cache-aided user $k\in \mathcal{K}_{1}$ is filled according to
\begin{align}\label{eq:placement}
	\mathcal{Z}_{k}=&\{W^{n,\phi}_{\tau} : k\in\tau, \forall \phi\in\mathcal{K}_{1}\setminus\tau,~\forall n\in[N]\}.
\end{align}
This is identical to the original placement in~\cite{maddah2014fundamental}, and the extra subpacketization (corresponding to index $\phi$) will facilitate the aforementioned combinatorial problem of matching XORs with uncoded subfiles.

\paragraph{Delivery Phase}

We will first focus on the case of $K_{2}=(L-1)T_{K_{1}}$, where the delay
\begin{align}
		T_{L}(K_{1},\gamma_{1},K_{2},\gamma_{2}=0)= \frac{K_{1}(1-\gamma_{1})+ K_{2}}{L+K_{1}\gamma_{1}}
\end{align}

can be achieved by simultaneously treating $K_{1}\gamma_{1}+L$ users. The extension to an arbitrary number of cache-less users is based on the algorithm of the above case, and will be described later on.

\paragraph*{Matching Problem}

As we have argued, the demands of the cache-aided users are treated by default via each XOR $X_{\chi}$. At the same time, we are able to treat $L-1$ cache-less users, under the condition that their received subfile index $\tau$ is the same and that $\tau\subset\chi$.

Thus, the challenge presented in the creation of a transmitted vector is to match a XOR index $\chi$ with a subfile index $\tau$ such that $\tau\subset \chi$ and at the end each $\chi$ is matched to a unique $\tau$, in the case where $T_{K_{1}}=1$ while if $T_{K_{1}}>1$, then $\chi$ needs to be matched to one of the $T_{K_{1}}$ different\footnote{We note here that, as discussed before, this index $\tau$ is the common index of all subfiles meant for the $L-1$ cache-less users during this transmission.} $\tau$. This constitutes a perfect matching over a bipartite graph, where the left-hand-side (LHS) nodes represent the $\binom{K_{1}}{K_{1}\gamma_{1}+1}$ different $\chi$ indices, while a node of the right-hand-side (RHS) represents one of the $T_{K_{1}}$ copies of the $\binom{K_{1}}{K_{1}\gamma_{1}}$ different $\tau$ intended for some $L-1$ cache-less recipients.

This type of problem is guaranteed to have a solution when each node from the LHS is connected to exactly $d\in\mathbb{N}$ nodes of the RHS (see \cite{agnarsson2007graph}). In our problem, it is easy to see that each node of the LHS is connected to $d=T_{K_{1}}\cdot(K_{1}\gamma_{1}+1)=K_{1}(1-\gamma_{1})$ nodes of the RHS.

Since an algorithm that finds such a solution may have high complexity (for example see \cite{fukuda1994finding}) we, instead, present an algorithm that requires a slightly higher subpacketization, but can provide an instant solution to the above matching problem. Specifically, the subpacketization of~\eqref{eqSubpacketizaionCacheless} contains the term $K_{1}(1-\gamma_{1})$, thus creating $K_{1}(1-\gamma_{1})$ copies of each XOR $X_{\chi}$, while the same holds for every subfile $\tau$ intended for the cache-less users. Our algorithm achieves a perfect matching by matching node $(\phi,\tau)$ of the RHS, where $\phi\in\mathcal{K}_{1}\setminus\tau$, with one of the XORs $X_{\{\phi\}\cup\tau}$ of the LHS.

\begin{algorithm}[th!]\caption{Transmission in the Cache-less Case}\label{alg:Delivery}
Assume $T_{K_{1}}\in\mathbb{N}$.\\
Group cache-less users: \begin{align*}
	g_{1}&=\{K_{1}+1, K_{1}+2,...,K_{1}+L-1\},...,\\
	g_{T_{K_{1}}}&=\{K_{1}+(L-1)\cdot (T_{K_{1}}-1),...,K\}.
\end{align*}\\
 \For{all $\tau\subset \mathcal{K}_{1}, ~|\tau|=K_1\gamma_{1}$ (pick file index)}{
 \For{all $\phi\in\mathcal{K}_{1}\setminus \tau$  (pick precoded user)}{
Set $\chi=\tau\cup\{\phi\}$\\
\For{all $t\in \left[T_{K_{1}}\right]$ (pick cache-less group)}{
Transmit:
\begin{align*}
	\mathbf{x}_{\tau,\phi}^{t}\!=\!
	\mathcal{H}_{\{\phi\}\cup g_{t}}^{-1}
	\begin{bmatrix}
		X_{\chi}\\
		W^{d_{g_{t}(1)},\phi}_{\tau}\\
		\vdots\\
		W^{d_{g_{t}(L-1)},\phi}_{\tau}\\
	\end{bmatrix}
	.
\end{align*}
}
}
}
\end{algorithm}

\paragraph{Transmission}
The delivery phase, in the form of pseudo-code, is presented in Alg.~\ref{alg:Delivery}, which we describe in this paragraph. Transmission commences by splitting the cache-less users into $T_{K_{1}}$ sets with $L-1$ users each (Step 2). Then we pick set $\tau\subset \mathcal{K}_{1}$, $|\tau|=K_{1}\gamma_{1}$ (Step 3), where this set serves two purposes. First, it identifies the cache-aided users that will not be assisted by precoding, and second, it identifies the subfile index that the selected cache-less users will receive. Next, cache-aided user $\phi$ is picked from the remaining set of cache-less users $\mathcal{K}_{1}\setminus\tau$ (Step 4). Then, set $g_{t}$, containing some $L-1$ cache-less users, is picked (Step 6).

The transmitted vector is created by calculating the precoder matrix $\mathcal{H}_{\{\phi\}\cup g_{t}}^{-1}$ such that it forms the normalized inverse of the channel matrix between the $L$-antenna transmitter and users of set $\{\phi\}\cup g_{t}$. The precoder matrix is multiplied by the information vector, which is comprised of XOR $X_{\chi}$ (intended for users $\tau\cup \{\phi\}$) and the $L-1$ uncoded subfiles that are all indexed by $\tau$ (intended for the cache-less users of set $g_{t}$) (Step 7).

\paragraph{Decodability}

In each transmitted vector, we can identify two sets of users, i) those that are assisted by precoding (set $\{\phi\}\cup g_{t}$) and ii) those that are not (set $\tau$). For the ``precoding-aided'' set, we can immediately recognize that due to the form of the precoder, these users will receive only their intended message. In the special case of the cache-aided user $\phi$, decoding the received XOR will also require the use of its cache.

Users belonging to the second set (set $\tau$) will be receiving a linear combination of all $L$ messages i.e.,
\begin{align}
	y_{k\in\tau} =\mathbf{h}^{H}_{k}\mathbf{h}^{\perp}_{g_{t}} X_{\chi}+  \sum_{i=1}^{L-1} \mathbf{h}^{H}_{k}\mathbf{h}^{\perp}_{\lambda\setminus\{g_{t}(i)\}} W^{d_{g_{t}(i)},\phi}_{\tau} + w_{k}
\end{align}
where $\lambda=\phi\cup g_{t}$. We can see that all the terms in the summation are cached at all users in set $\tau$, thus can be removed from the equation. What remains is XOR $X_{\chi}$, which can be decoded (this is direct from \cite{maddah2014fundamental}) by all members of set $\chi$.

\paragraph{Transmitting unique subfiles every time}

At this point the reader may have noticed that the secondary subfile index, associated with the subfile of the cache-aided users, is not identified in Alg.~\ref{alg:Delivery}. This is intentional, since every time we transmit subfile $W^{d_{k}}_{\tau}$, we pick a new upper index such that all such indices have been picked. We continue to show that the number of times a subfile is transmitted is exactly $K_{1}-K_{1}\gamma_{1}$.

\begin{proof}
	Let us assume we are interested in delivering $W^{d_{k}}_{\mu}$ to a user belonging to the set of cache-less users. We can see that subfile index $\mu$ uniquely defines Step $3$, i.e. $\tau=\mu$, while the user's number, $k$, uniquely defines Step $6$. The algorithm goes over all possible $\phi\in\mathcal{K}_{1}\setminus\mu$ (Step 4), thus at the end, different parts of subfile $W^{d_{k}}_{\mu}$ will be delivered exactly $K_{1}(1-\gamma_{1})$ times to cache-less user $k$.

Now we turn our focus to some cache-aided user $k$ and examine how many times this user will receive something from subfile $W^{d_{k}}_{\mu}$. We need to count the number of times something from this subfile is delivered when user $k$ is assisted by precoding, as well as the number of times parts of this same subfile are transmitted when user $k$ is not assisted by precoding.

When user $k$ is assisted by precoding, it follows that the remaining cache-aided users are uniquely defined by $\mu$, i.e. $\tau=\mu$. Thus, the user's number defines Step $3$ while the user's request defines Step $4$. Then, Algorithm~\ref{alg:Delivery} will iterate Step $6$ a total of $T_{K_{1}}$ times.

Further, let us look at the case when user $k$ is not assisted by precoding, which means that $\chi=\{k\}\cup \mu$. As we know, the set of precoded users $\tau$ satisfies $\tau\subset \chi$, while there are a total of $K_{1}\gamma_{1} +1$ different $\tau$ for a specific $\chi$. Since user $k$ is not assisted by precoding, it follows that $k\in\tau$, which further means that the number of possible and different $\tau$ is $K_{1}\gamma_{1}$. For each of these $\tau$, there is a unique $\phi$ (Step $4$) and for each pair $(\phi,\tau)$ Step $6$ is iterated a total of $T_{K_{1}}$ times.

In total, the number of times parts of subfile $W^{d_{k}}_{\mu}$ are transmitted -- when $k$ is a cache-aided user -- is equal to
\begin{align}
	T_{K_{1}}+K_{1}\gamma_{1} \cdot T_{K_{1}}= K_{1}(1-\gamma_{1})
\end{align}
which concludes the proof.
\end{proof}

\subsubsection{Scheme generalization and delay calculation}

In this section, we will generalize Alg.~\ref{alg:Delivery} to the case of $T_{K_{1}}\notin\mathbb{N}$ and we will further calculate the delay of the scheme.

\paragraph{Scheme generalization} We remind that $(L-1)\cdot T_{K_{1}}$ represents the threshold beyond which we cannot serve all cache-less users with the maximum DoF. In the case where $(L-1)\cdot T_{K_{1}}\notin\mathbb{N}$, it follows that the number of cache-less users $K_{2}$ that we can serve using the maximum DoF must be either smaller or higher than $(L-1)\cdot T_{K_{1}}$, with both of these cases being treated in the following paragraph. If, on the other hand, $(L-1)\cdot T_{K_{1}}\in\mathbb{N}$, while $T_{K_{1}}\notin\mathbb{N}$, then we can simply increase the subpacketization by a multiplicative factor of $L-1$. This will create a bipartite graph with $(L-1)K_{1}(1-\gamma_{1})\binom{K_{1}}{K_{1}\gamma_{1}+1}$ nodes on the LHS and $(L-1)(K_{1}\gamma_{1}+1)\binom{K_{1}}{K_{1}\gamma_{1}}$ nodes on the RHS, which means that both numbers are integers, which means that the perfect matching can be achieved.

The other two constraints that we need to address, in order to generalize our algorithm, are the cases where $K_{2}\gtrless (L-1) T_{K_{1}}$, also corresponding to the case where $(L-1)\cdot T_{K_{1}}\notin\mathbb{N}$.

First, if $K_{2}< (L-1) T_{K_{1}}$, we proceed as in Alg.~\ref{alg:Delivery} but when the demands of the cache-less users have been completely satisfied, then we move to treat only the cache-aided users (through any multi-antenna algorithm, such as \cite{shariatpanahiMultiServerTransIT,naderializadehFundamentalTransIT,lampirisSubpacketizationJSAC,lampirisCsitISIT}), at a rate of $D_{L}(K_{1},\gamma_{1})=L+K_{1}\gamma_{1}$ users at a time, thus yielding the overall DoF of $D_{L}(K_{1},\gamma_{1},K_{2},\gamma_{2} = 0)=L+K_{1}\gamma_{1}$ for the whole duration of the transmission.

Finally, for the case of $K_{2}>(L-1)T_{K_{1}}$, delivery is split into two sub-phases. During the first sub-phase, we simply employ Alg.~\ref{alg:Delivery} on the first $(L-1)T_{K_{1}}$ cache-less users while simultaneously completing the delivery to all $K_1$ cache-aided users. This is done at a rate of $K_{1}\gamma_1+L$ users at a time. Then in the second sub-phase, we treat the remaining $K_{2}-(L-1)T_{K_{1}}$ cache-less users via ZF-precoding, $L$ users at a time. The above sums up to a total delay
\begin{equation*}
	T_{L}(K_{1},\gamma_{1},K_{2},\gamma_{2}=0)=T_{K_{1}}+\frac{K_{2}\!-\!(L-1)T_{K_{1}}}{\min\{L, K_{2}\}}.
\end{equation*}

\paragraph{Delay Calculation}

Following the steps of Alg.~\ref{alg:Delivery}, corresponding to the case of $K_{2}=(L-1)T_{K_{1}}$, we have
\begin{align}
	T_{L}\bigg(K_{1},\gamma_{1},&(L-1)\frac{K_{1}(1-\gamma_{1})}{K_{1}\gamma_{1}+1},0\bigg)
	=\\
	&\frac{\overbrace{\binom{K_{1}}{K_{1}\gamma_{1}}}^{\text{Step }3} \overbrace{K_{1}(1-\gamma_{1})}^{\text{Step }4}\overbrace{\frac{K_{1}(1-\gamma_{1})}{K_{1}\gamma_{1}+1}}^{\text{Step }6}}{ \underbrace{K_{1}(1-\gamma_{1})\binom{K_{1}}{K_{1}\gamma_{1}}}_{\text{Subpacketization}}}
	=\\
	& \frac{K_{1}(1-\gamma_{1})}{K_{1}\gamma_{1}+1} = \frac{K_{1}(1-\gamma_{1})+K_{2}}{K_{1}\gamma_{1}+L}.
\end{align}

\subsection{Scheme description for setting with heterogeneous cache-aided users} \label{secCacheAidedScheme}

In this section we consider the $L$-antenna MISO BC setting, where both user types are equipped with caches of heterogeneous sizes $\gamma_{1},\gamma_{2}\in (0,\gamma_{1})$. In the context of the single antenna heterogeneous setting, it has been an elusive goal to achieve the performance of the corresponding homogeneous system with $\gamma_{av}=\frac{K_{1}\gamma_{1}+K_{2}\gamma_{2}}{K}$ (i.e., of the homogeneous system with the same cumulative cache size constraint). What we will show here is that, for a wide range of parameters, the corresponding performance of the multi-antenna homogeneous setting can indeed be achieved in the multi-antenna heterogeneous setting.

\subsubsection{Algorithm overview}
First, we will focus on proving the result of~\eqref{eq2typesPositiveGammasLarge}, where we can see that each transmission serves exactly $L+K_{1}\gamma_{1}+K_{2}\gamma_{2}$ users.

The main idea is to use the extra spatial degrees of freedom as a way to separate some users that belong to one group from some users that belong to the other group.

As before, we will create an $L\times 1$ information vector, which will be multiplied by an $L\times L$ precoder matrix to form the transmitting vector. The elements of the created vector belong to one of $4$ types. One element corresponds to a XOR of $1+K_{1}\gamma_{1}$ subfiles intended for some users of set $\mathcal{K}_{1}$, while another element corresponds to a XOR of $1+K_{2}\gamma_{2}$ subfiles which is intended for some users of set $\mathcal{K}_{2}$. The remaining $L-2$ elements will carry $L_{1}-1, ~L_{1}\in [1,L-1]$ uncoded messages for users of set $\mathcal{K}_{1}$ and $L_{2}-1,~ L_{2}\in [1,L-1]$ uncoded messages for users of set $\mathcal{K}_{2}$, where $L_{1}+L_{2}=L$, and where the exact values of variables $L_1$ and $L_{2}$ are calculated by solving the equality
\begin{align}\label{eqStreamsCalculation}
	\frac{K_{1}(1-\gamma_{1})}{L_{1} + K_{1}\gamma_{1}} =\frac{K_{2}(1-\gamma_{2})}{L_{2} + K_{2}\gamma_{2}}
\end{align}
under the constraint that $L_{1}\ge 1$.

In other words, the above solution allocates $L_1$ streams to the cache-aided users and $L_2$ streams to the others. This observation will allow us to view the problem at hand as a concatenation of two multi-antenna problems. In what follows, we will make use of a new multi-antenna Coded Caching algorithm corresponding to the homogeneous setting and which we present in detail in Appendix~\ref{secNewMultiantennaScheme}. Further, we will assume that variables $L_{1}, L_{2}$ are integers, while we relegate the non-integer case to Appendix~\ref{appExtensionTwoTypesNonIntegerStreams}.

\subsubsection{Algorithm Details}

\paragraph{Placement} We split each file $W^{n},~n\in[N]$ into
\begin{align}
	S_{c}=(K_{1}\gamma_{1}+L_{1})\binom{K_{1}}{K_{1}\gamma_{1}}(K_{2}\gamma_{2}+L_{2})\binom{K_{2}}{K_{2}\gamma_{2}}
\end{align}
subfiles, where each subfile $W^{n,\phi_{1},\phi_{2}}_{\tau_{1},\tau_{2}}$ is characterized by $4$ indices, $\phi_{1}\in[K_{1}\gamma_{1}+L_{1}]$, $\tau_{1}\subset\mathcal{K}_{1},~|\tau_{1}|=K_{1}\gamma_{1}$ and $\phi_{2}\in[K_{2}\gamma_{2}+L_{2}]$, $\tau_{2}\subset\mathcal{K}_{2},~|\tau_{2}|=K_{2}\gamma_{2}$, where indices $\tau_{1}$ and $ \tau_{2}$ reveal which users have cached this subfile from sets $\mathcal{K}_{1}$ and $\mathcal{K}_{2}$, respectively, while indices $\phi_{1}, \phi_{2}$ will help, as previously in the cache-less case, with the combinatorial problem of matching subfile indices with XORs.

The caches of the users are filled as follows
\begin{align}\label{eqPlacementTwoTypes1}
	\mathcal{Z}_{k_{1}\in\mathcal{K}_{1}}& =\{ 	W^{n,\phi_{1},\phi_{2}}_{\tau_{1},\tau_{2}}~~: k_{1}\in\tau_{1}, \forall \tau_{2},\phi_{1},\phi_{2}			\}\\\label{eqPlacementTwoTypes2}
	\mathcal{Z}_{k_{2}\in\mathcal{K}_{2}} &=\{ 	W^{n,\phi_{1},\phi_{2}}_{\tau_{1},\tau_{2}}~~: k_{2}\in\tau_{2}, \forall \tau_{1},\phi_{1},\phi_{2}		\}
\end{align}
where it is easy to see that the above placement respects the cache-size constraint of each user.

\paragraph{Delivery Phase}

\begin{algorithm}[th!]\caption{Transmission Process for Multi-Antenna Heterogeneous Coded Caching }\label{algDelivery2types}
	\For{all $\chi_{1}\subseteq[K_{1}]$, $|\chi_{1}|=K_{1}\gamma_{1}+1$}{
		\For{all $s_{1}\in\chi_{1}$}{
			\For{all $\chi_{2}\subseteq[K_{2}]$, $|\chi_{2}|=K_{2}\gamma_{2}+1$}{
				\For{all $s_{2}\in\chi_{2}$ }{
					Set: $\tau_{1}=\chi_{1}\setminus\{s_{1}\}$\\
					$\tau_{2}=\chi_{2}\setminus\{s_{2}\}$\\
					$\lambda=\{s_1\}\cup\{s_2\}\cup\beta_{\tau_{1},s_{1}}\cup\beta_{\tau_{2},s_{2}}$.\\
					Transmit:
		\begin{align*}
		\mathbf{x}^{s_{1},\tau_{1}}_{s_{2},\tau_{2}}=
		\begingroup
			\renewcommand*{\arraystretch}{1.7}
			\mathcal{H}^{-1}_{\lambda}\cdot
			\begin{bmatrix}
				X_{\chi_{1},\tau_{2}}\\
				W^{d_{\beta_{\tau_1,s_1}(1)}}_{\tau_1,\tau_{2}}\\
				\vdots\\
				W^{d_{\beta_{\tau_1,s_1}(L_1-1)}}_{\tau_1,\tau_{2}}\\
				X_{\chi_{2},\tau_{1}}\\
				W^{d_{\beta_{\tau_2,s_2}(1)}}_{\tau_{1},\tau_2}\\
				\vdots\\
				W^{d_{\beta_{\tau_2,s_2}(L_2-1)}}_{\tau_{1},\tau_2}
			\end{bmatrix}
		\endgroup
		\end{align*}	
				}	
			}
		}
	}
\end{algorithm}

Algorithm~\ref{algDelivery2types} describes the delivery phase in the form of a pseudo-code. We begin by noting that symbol $\beta^{(i)}_{\tau,s}\subseteq[K_{i}]\setminus\tau$  denotes a set of $L_{i}-1$ elements, which are selected to be the elements following $s\in[K_{i}]\setminus\tau$. For example, assuming that $L_{i}=2$ and $[K_{i}]\setminus \tau =\{1,2,3,4\}$, then $\beta^{(i)}_{\tau,1} =\{2,3\}$ and $\beta^{(i)}_{\tau,3} =\{1,4\}$. In what follows we will refrain from using the upper index $i$ when describing set $\beta$, for the sake of simplicity.

As mentioned before, the algorithm works as a concatenation of two multi-antenna Coded Caching schemes. Specifically, it begins (Step $1$) by picking a set of $K_{1}\gamma_{1}+1$ users ($\chi_{1}\subset\mathcal{K}_{1}$) and then (Step $2$) by selecting one of those users ($s_{1}\in\chi_{1}$) to be the precoding-assisted user. The following two steps (Step $3$ and Step $4$) are responsible for picking a set of $K_{2}\gamma_{2}+1$ users ($\chi_{2}\subset\mathcal{K}_{2}$) and user $s_{2}\in\chi_{2}$, respectively.

Once these $K_{1}\gamma_{1}+K_{2}\gamma_{2}+2$ users have been selected, the algorithm proceeds with the calculation of sets $\tau_1\subset \mathcal{K}_1, \tau_2\subset \mathcal{K}_2$ which correspond to the set of users that will not be assisted by precoding, as well as proceeds to define the set $\lambda$ that contains the precoding-assisted users from both sets $\mathcal{K}_1$ and $\mathcal{K}_2$.

In the last step, the algorithm creates the transmitting vector. First, it calculates the normalized inverse of the channel between the $L$-antenna transmitter and the users of set $\lambda$. Then, it forms the information vector which is comprised of $L$ elements. Two of these elements are XORs, $X_{\chi_{1}, \tau_{2}}$ and $X_{\chi_{2},\tau_{1}}$, while the remaining are $L_{1}-1$ and $L_{2}-1$ uncoded messages are respectively intended for some users from set $\mathcal{K}_1$ and set $\mathcal{K}_2$. The transmitting vector is formed as a multiplication of the precoding matrix $\mathcal{H}^{-1}_{\lambda}$ with the information vector.

We can see that having selected the XOR for users in $\mathcal{K}_{1}$, along with the precoded user (Steps 1 and 2), then the algorithm goes over all possible combinations of XORs and their respective users corresponding to set $\mathcal{K}_{2}$. In the case of users of set $\mathcal{K}_{1}$, this allows to deliver all the index pairs $(\phi_{2},\tau_{2})$ that correspond to the other set of users.

\paragraph{Decoding Process}

The decoding process is similar to that of Alg.~\ref{algDeliveryXORplusUncoded}. For the users in set $\lambda$ i.e., the precoding-assisted users, we can see that they receive only one of the $L$ messages, thus they either decode using a ZF precoder (users in $\lambda\setminus\{s_{1}\}\setminus\{s_{2}\}$) or they use a ZF decoder and continue to decode their respective XOR by use of their cached content (users $s_{1}$ and $s_{2}$).

The remaining users ($\chi_{1}\cup \chi_{2}\setminus\{s_{1}\}\setminus\{s_{2}\}$) will receive a linear combination of all $L$ information messages, which they can decode using the acquired CSI and their stored content. For example, any user $k\in \tau_{1}$ will receive
\begin{align}\nonumber
	y_{k} =&\mathbf{h}^{H}_{k}\mathbf{h}^{\perp}_{\lambda\setminus\{s_{1}\}}X_{\chi_{1},\tau_{2}} +\mathbf{h}^{H}_{k}\sum_{i=1}^{L-1} \mathbf{h}^{\perp}_{\lambda\setminus\beta_{\tau_{1},s_{1}(i)}}W^{d_{\beta_{\tau_{1},s_{1}(i)}}}_{\tau_{1},\tau_{2}}\\ \label{eqDecodingTwoTypes}
	+&\mathbf{h}^{H}_{k}\mathbf{h}^{\perp}_{\lambda\setminus\{s_{2}\}}X_{\chi_{2},\tau_{1}}+\mathbf{h}^{H}_{k}\sum_{i=1}^{L-1} \mathbf{h}^{\perp}_{\lambda\setminus\beta_{\tau_{2},s_{2}(i)}}W^{d_{\beta_{\tau_{2},s_{2}(i)}}}_{\tau_{1},\tau_{2}}
\end{align}
where, excluding the first summand, all the other terms above are completely known to any receiver of set $\tau_{1}$, and thus can be removed. The remaining XOR $X_{\chi_{1},\tau_{2}}$ is decodable by any user in set $\tau_{1}$.

\subsection{Extension to the remaining cases}

In this section we will prove the result of~\eqref{eq2typesPositiveGammasLarge}, which corresponds to the case where the number of streams that should be allocated to the group with the higher cache size is less than one. In this case, we simply treat the users of set $\mathcal{K}_{1}$ using one stream, and allocate the remaining $L_{2}=L-1$ streams for the second set of users, $\mathcal{K}_{2}$.

At some point in the transmission, all the files requested by set $\mathcal{K}_{1}$ have been successfully communicated, while users of set $\mathcal{K}_{2}$ require more transmissions to completely receive their files. This is because we transmit at a rate of $K_{1}\gamma_{1}+1$ to users of set $\mathcal{K}_{1}$ and with rate of $L-1+K_{2}\gamma_{2}$ to users of set $\mathcal{K}_{2}$, where $\frac{K_{1}(1-\gamma_{1})}{1 + K_{1}\gamma_{1}} <\frac{K_{2}(1-\gamma_{2})}{L-1 + K_{2}\gamma_{2}}$.

To complete the transmission of files to the second set of users, we employ any of the multi-antenna cache-aided algorithms serving $L+K_{2}\gamma_{2}$ users at a time. Thus, the completion time corresponding to the two sets of transmissions, takes the form
\begin{align}\nonumber
	T_{L}&(K_{1},\gamma_{1},K_{2},\gamma_{2})=\\
	&\frac{K_{1}(1-\gamma_{1})}{K_{1}\gamma_{1}+1}+
	\frac{K_{2}(1-\gamma_{2})-T_{1}^{(1)}(L-1+K_{2}\gamma_{2})}{\min\{K_{2}, L+K_{2}\gamma_{2}\}}.
\end{align}

\section{Examples}

In this section we will display two examples, one for the cache-less users case, and one for the other case. Both of the examples will present the cases where the full DoF of $D_{L}(K_{1},\gamma_{1},K_{2},\gamma_{2})=L+K_{1}\gamma_{1}+K_{2}\gamma_{2}$ is achievable.

We will use the standard notation for user demands, where $A\triangleq W^{d_{1}}$, $B\triangleq W^{d_{2}}$, and so on. We will also omit the symbol $\oplus$ in the description of the XORs, so for example, we will write $A_{\tau_{1},\tau_{2}}B_{\tau_{1}',\tau_{2}'}$ instead of $A_{\tau_{1},\tau_{2}}\oplus B_{\tau_{1}',\tau_{2}'}$, etc.

\subsection{Cache-less users example ($\gamma_2=0$)}\label{secCachelessExample}
In this example, we will consider the $L=2$-antenna MISO BC, where $K_1=5$ users have caches of normalized size $\gamma_1 = \frac{1}{5}$, while $K_2=2$ users have no caches.

First, each file $W^{n},~n\in [N]$ is subpacketized into
\begin{align}
	W^{n}\to \{ W^{n,\phi}_{\tau},~\tau\subset\mathcal{K}_{1},~|\tau |=K_{1}\gamma_1, ~\phi\in\mathcal{K}_{1}\setminus\tau\}.
\end{align}

The caches of the users in set $\mathcal{K}_{1}$ are filled according to Eq.~\eqref{eq:placement} thus, for example, the cache of the first user contains
\begin{align*}
	\mathcal{Z}_{1}=\big\{&W^{n,2}_{1}, W^{n,3}_{1},W^{n,4}_{1},W^{n,5}_{1},
	~~	\forall n\in[N]\big\}.
\end{align*}

Before we describe the entire sequence of transmitted vectors, we focus briefly on a single vector and its decoding at each user.

\paragraph{Transmission and decoding for a specific set of users}
The goal is to treat $K_1\gamma_1+L = 3$ users in each time-slot. Let us look in detail at one transmitted vector, where we treat cache-aided users $ 1$ and $2$ together with cache-less user $6$. In this case, we transmit
	\begin{align}
		\mathbf{x}_{1,2}^{1}&=\mathcal{H}_{26}^{-1}
	\begin{bmatrix}
		A_{2}^{1}B_{1}^{2}\\
		F_{1}^{2}
	\end{bmatrix}.
	\end{align}
Let us examine the decoding process at the users. First, we can see that User $2$ will receive --- due to ZF precoding and the design of the precoding matrix $\mathcal{H}_{26}^{-1}$ --- only the XORed message $A_{2}^{1}B_{1}^{2}$, and can thus proceed to cache-out $A_{2}^{1}$ and decode the desired subfile $B_{1}^{2}$. User $6$ will receive, again due to precoding, only its respective desired message $F_{1}^{2}$.
Finally, User $1$ will receive a linear combination of $A_{2}^{1} B_{1}^{2}$ and $F_{1}^{2}$, as follows
\begin{align}
	y_{1}=\mathbf{h}_{1}^{H} \mathbf{h}^{\perp}_{6} A_{2}^{1} B_{1}^{2} + \mathbf{h}_{1}^{H} \mathbf{h}^{\perp}_{2} F_{1}^{2} +w_{1}.
\end{align}
 First, by caching out $F_{1}^{2}$, User $1$ can decode the XOR, and then again by accessing its cache, User 1 can remove $B_{1}^{2}$ to decode its desired message $A_{2}^{1}$.

\paragraph{Sequence of transmissions}
We now proceed with the entire sequence of the 40 transmissions. Given that each file is subpacketized into $K_1(1-\gamma_1)\binom{K_1}{K_1\gamma_1} = 5(1-\frac{1}{5})\binom{5}{1} = 20$ subpackets, the $40$ transmissions will correspond to the desired delay of 
\begin{align}
	T_{2}\left( 	5,\frac{1}{5},2,0		\right)=\frac{K_{2}+K_{1}(1-\gamma_1)}{L+K_{1}\gamma_1}=2.
\end{align}

The transmissions are:
\begin{align*}
	&\mathbf{x}_{1,2}^{1}=\mathcal{H}_{26}^{-1}
	\begin{bmatrix}
		A_{2}^{1}B_{1}^{2}\\
		F_{1}^{2}
	\end{bmatrix}
	,
	&\mathbf{x}_{1,2}^{2}=\mathcal{H}_{27}^{-1}
	\begin{bmatrix}
		A_{2}^{3}B_{1}^{3}\\
		G_{1}^{2}
	\end{bmatrix}
	\\
	&\mathbf{x}_{1,3}^{1}=\mathcal{H}_{36}^{-1}
	\begin{bmatrix}
		A_{3}^{1}C_{1}^{3}\\
		F_{1}^{3}
	\end{bmatrix}
	,
	&\mathbf{x}_{1,3}^{2}=\mathcal{H}_{37}^{-1}
	\begin{bmatrix}
		A_{3}^{2}C_{1}^{2}\\
		G_{1}^{3}
	\end{bmatrix}
	\\
	&\mathbf{x}_{1,4}^{1}=\mathcal{H}_{46}^{-1}
	\begin{bmatrix}
		A_{4}^{1}D_{1}^{4}\\
		F_{1}^{4}
	\end{bmatrix}
	,
	&\mathbf{x}_{1,4}^{2}=\mathcal{H}_{47}^{-1}
	\begin{bmatrix}
		A_{4}^{2}D_{1}^{2}\\
		G_{1}^{4}
	\end{bmatrix}
	\\
	&	\mathbf{x}_{1,5}^{1}=\mathcal{H}_{56}^{-1}
	\begin{bmatrix}
		A_{5}^{1}E_{1}^{5}\\
		F_{1}^{5}
	\end{bmatrix}
	,
	&\mathbf{x}_{1,5}^{2}=\mathcal{H}_{57}^{-1}
	\begin{bmatrix}
		A_{5}^{2}E_{1}^{2}\\
		G_{1}^{5}
	\end{bmatrix}
	\\
	&\mathbf{x}_{2,1}^{1}=\mathcal{H}_{16}^{-1}
	\begin{bmatrix}
		A_{2}^{4}B_{1}^{4}\\
		F_{2}^{1}
	\end{bmatrix}
	,
	&\mathbf{x}_{2,1}^{2}=\mathcal{H}_{17}^{-1}
	\begin{bmatrix}
		A_{2}^{5}B_{1}^{5}\\
		G_{2}^{1}
	\end{bmatrix}
	\\
	&	\mathbf{x}_{2,3}^{1}=\mathcal{H}_{36}^{-1}
	\begin{bmatrix}
		B_{3}^{2}C_{2}^{3}\\
		F_{2}^{3}
	\end{bmatrix}
	,
	&\mathbf{x}_{2,3}^{2}=\mathcal{H}_{37}^{-1}
	\begin{bmatrix}
		B_{3}^{1}C_{2}^{1}\\
		G_{2}^{3}
	\end{bmatrix}
	\\
	&\mathbf{x}_{2,4}^{1}=\mathcal{H}_{46}^{-1}
	\begin{bmatrix}
		B_{4}^{2}D_{2}^{4}\\
		F_{2}^{4}
	\end{bmatrix}
	,
	&\mathbf{x}_{2,4}^{2}=\mathcal{H}_{47}^{-1}
	\begin{bmatrix}
		B_{4}^{1}D_{2}^{1}\\
		G_{2}^{4}
	\end{bmatrix}		
	\\
	&\mathbf{x}_{2,5}^{1}=\mathcal{H}_{56}^{-1}
	\begin{bmatrix}
		B_{5}^{2}E_{2}^{5}\\
		F_{2}^{5}
	\end{bmatrix}
	,
	&\mathbf{x}_{2,5}^{2}=\mathcal{H}_{57}^{-1}
	\begin{bmatrix}
		B_{5}^{1}E_{2}^{1}\\
		G_{2}^{5}
	\end{bmatrix}	
	\\
	&\mathbf{x}_{3,1}^{1}=\mathcal{H}_{16}^{-1}
	\begin{bmatrix}
		A_{3}^{4}C_{1}^{4}\\
		F_{3}^{1}
	\end{bmatrix}
	,
	&\mathbf{x}_{3,1}^{2}=\mathcal{H}_{17}^{-1}
	\begin{bmatrix}
		A_{3}^{5}C_{1}^{5}\\
		G_{3}^{1}
	\end{bmatrix}
	\\
	&\mathbf{x}_{3,2}^{1}=\mathcal{H}_{26}^{-1}
	\begin{bmatrix}
		B_{3}^{4}C_{2}^{4}\\
		F_{3}^{2}
	\end{bmatrix}
	,
	&\mathbf{x}_{3,2}^{2}=\mathcal{H}_{27}^{-1}
	\begin{bmatrix}
		B_{3}^{5}C_{2}^{5}\\
		G_{3}^{2}
	\end{bmatrix}
	\\
	&\mathbf{x}_{3,4}^{1}=\mathcal{H}_{46}^{-1}
	\begin{bmatrix}
		C_{4}^{3}D_{3}^{4}\\
		F_{3}^{4}
	\end{bmatrix}
	,
	&\mathbf{x}_{3,4}^{2}=\mathcal{H}_{47}^{-1}
	\begin{bmatrix}
		C_{4}^{1}D_{3}^{1}\\
		G_{3}^{4}
	\end{bmatrix}	
	\\
	&
	\mathbf{x}_{3,5}^{1}=\mathcal{H}_{56}^{-1}
	\begin{bmatrix}
		C_{5}^{3}E_{3}^{5}\\
		F_{3}^{5}
	\end{bmatrix}
	,
	&\mathbf{x}_{3,5}^{2}=\mathcal{H}_{57}^{-1}
	\begin{bmatrix}
		C_{5}^{1}E_{3}^{1}\\
		G_{3}^{5}
	\end{bmatrix}
	\\
	&\mathbf{x}_{4,1}^{1}=\mathcal{H}_{16}^{-1}
	\begin{bmatrix}
		A_{4}^{3}D_{1}^{3}\\
		F_{4}^{1}
	\end{bmatrix}
	,
	&\mathbf{x}_{4,1}^{2}=\mathcal{H}_{17}^{-1}
	\begin{bmatrix}
		A_{4}^{5}D_{1}^{5}\\
		G_{4}^{1}
	\end{bmatrix}
	\\
	&\mathbf{x}_{4,2}^{1}=\mathcal{H}_{26}^{-1}
	\begin{bmatrix}
		B_{4}^{3}D_{2}^{3}\\
		F_{4}^{2}
	\end{bmatrix}
	,
	&\mathbf{x}_{4,2}^{2}=\mathcal{H}_{27}^{-1}
	\begin{bmatrix}
		B_{4}^{5}D_{2}^{5}\\
		G_{4}^{2}
	\end{bmatrix}
	\\
	&\mathbf{x}_{4,3}^{1}=\mathcal{H}_{36}^{-1}
	\begin{bmatrix}
		C_{4}^{2}D_{3}^{2}\\
		F_{4}^{3}
	\end{bmatrix}
	,
	&\mathbf{x}_{4,3}^{2}=\mathcal{H}_{37}^{-1}
	\begin{bmatrix}
		C_{4}^{5}D_{3}^{5}\\
		G_{4}^{3}
	\end{bmatrix}	
	\\
	&
	\mathbf{x}_{4,5}^{1}=\mathcal{H}_{56}^{-1}
	\begin{bmatrix}
		D_{5}^{4}E_{4}^{5}\\
		F_{4}^{5}
	\end{bmatrix}
	,
	&\mathbf{x}_{4,5}^{2}=\mathcal{H}_{57}^{-1}
	\begin{bmatrix}
		D_{5}^{1}E_{4}^{1}\\
		G_{4}^{5}
	\end{bmatrix}
	\\
	&\mathbf{x}_{5,1}^{1}=\mathcal{H}_{16}^{-1}
	\begin{bmatrix}
		A_{5}^{3}E_{1}^{3}\\
		F_{5}^{1}
	\end{bmatrix}
	,
	&\mathbf{x}_{5,1}^{2}=\mathcal{H}_{17}^{-1}
	\begin{bmatrix}
		A_{5}^{4}E_{1}^{4}\\
		G_{5}^{1}
	\end{bmatrix}		
	\\
	&\mathbf{x}_{5,2}^{1}=\mathcal{H}_{26}^{-1}
	\begin{bmatrix}
		B_{5}^{3}E_{2}^{3}\\
		F_{5}^{2}
	\end{bmatrix}
	,
	&\mathbf{x}_{5,2}^{2}=\mathcal{H}_{27}^{-1}
	\begin{bmatrix}
		B_{5}^{4}E_{2}^{4}\\
		G_{5}^{2}
	\end{bmatrix}	
	\\
	&\mathbf{x}_{5,3}^{1}=\mathcal{H}_{36}^{-1}
	\begin{bmatrix}
		C_{5}^{2}E_{3}^{2}\\
		F_{5}^{3}
	\end{bmatrix}
	,
	&\mathbf{x}_{5,3}^{2}=\mathcal{H}_{37}^{-1}
	\begin{bmatrix}
		C_{5}^{4}E_{3}^{4}\\
		G_{5}^{3}
	\end{bmatrix}		
	\\
	&\mathbf{x}_{5,4}^{1}=\mathcal{H}_{46}^{-1}
	\begin{bmatrix}
		D_{5}^{2}E_{4}^{2}\\
		F_{5}^{4}
	\end{bmatrix}
	,
	&\mathbf{x}_{5,4}^{2}=\mathcal{H}_{47}^{-1}
	\begin{bmatrix}
		D_{5}^{3}E_{4}^{3}\\
		G_{5}^{4}
	\end{bmatrix}.
\end{align*}
The $40$ slots, each of normalized duration
\begin{align*}
	t_{s}= \left[(K_{1}(1-\gamma_{1})\binom{K_{1}}{K_{1}\gamma_{1}}\right]^{-1} = \frac{1}{20}
\end{align*}
imply a delay $T_{2}\left(5,\frac{1}{5},2,0\right)=2$, which matches the delay
\begin{align*}
	T_{2}\left(7,\frac{1}{7} \right)= \frac{7-1}{2+1}=\frac{K(1-\gamma_{av})}{2+K\gamma_{av}}=2
\end{align*}
that would be needed in the homogeneous case where the $K=7$ users would have an identical $\gamma_{av} = \frac{1}{7}$ (same cumulative cache $K\gamma_{av}=1$).

\subsection{Two Type Cache-aided Example}

In this section we present an example that illustrates the mechanics of the two user case. Specifically, we will focus on the $L=3$-antenna MISO BC, where $K_{1}=5$ users of set $\mathcal{K}_{1}$ are equipped with caches of normalized size $\gamma_{1}=\frac{2}{5}$, while $K_{2}=4$ users of set $\mathcal{K}_{2}$ are equipped with caches of normalized size $\gamma_{2}=\frac{1}{4}$. For this setting, the number of streams (cf. Eq.~\eqref{eqStreamsCalculation}) should be divided as $L_{1}=1$ and $L_{2}=2$.

We begin by splitting each file into
\begin{align*}
	S=(K_{1}\gamma_{1}+L_{1})(K_{1}\gamma_{2}+L_{2})\binom{K_{1}}{K_{1}\gamma_{1}}\binom{K_2}{K_{2}\gamma_{2}}= 360
\end{align*}
subfiles, where subfile $W^{n, \phi_{1}, \phi_{2}}_{\tau_{1},\tau_{2}}$ has indices $\tau_{1}\subset[5],~|\tau_{1}|=2$, $\phi_{1}\in[3]$, $\tau_2\subset[4],~|\tau_{2}|=1$, $\phi_{2}\in[3]$.

\subsubsection{Placement Phase} This phase is carried out according to Eq.~\eqref{eqPlacementTwoTypes1}-\eqref{eqPlacementTwoTypes2} where, for example, the caches of users $1\in\mathcal{K}_{1}$ and $6\in\mathcal{K}_{2}$ are filled as
\begin{align*}
	\mathcal{Z}_{1}&=\{W^{n,\phi_{1},\phi_{2}}_{12,\tau_{2}}, W^{n,\phi_{1},\phi_{2}}_{13,\tau_{2}}, W^{n,\phi_{1},\phi_{2}}_{14,\tau_{2}}, W^{n,\phi_{1},\phi_{2}}_{15,\tau_{2}}, \forall \tau_{2},\phi_{1},\phi_{2}\}\\
	\mathcal{Z}_{6}&=\{W^{n,\phi_{1},\phi_{2}}_{\tau_{1},6}, \forall \tau_{1},\phi_{1},\phi_{2}\}.
\end{align*}

\subsubsection{Delivery Phase}

For notational simplicity, we abstain from using indices $\phi_{1},\phi_{2}$. Further, we will only present one iteration of the algorithmic steps 1-2,  that delivers the first XOR ($A_{23}B_{13}C_{12}$ intended for users $1,2,3$) of the user set $\mathcal{K}_{1}$, while it goes through all other steps (Steps 3-8).
\begin{align*}
	&\mathbf{x}_{1, 23}^{6,7}=\mathcal{H}^{-1}_{168}
	\begin{bmatrix}
		 A_{23,7} B_{13,7} C_{12,7}\\
		 F_{23,7} G_{23,6}\\
		 H_{23,7}
	\end{bmatrix}
	\\ 
	&\mathbf{x}_{1, 23}^{7,6}=\mathcal{H}^{-1}_{178}
	\begin{bmatrix}
		 A_{23,6} B_{13,6} C_{12,6}\\
		 F_{23,7} G_{23,6}\\
		 H_{23,6}
	\end{bmatrix}	
	\\
	&\mathbf{x}_{1, 23}^{6,8}=\mathcal{H}^{-1}_{167}
	\begin{bmatrix}
		 A_{23,8} B_{13,8} C_{12,8}\\
		 F_{23,8} H_{23,6}\\
		 G_{23,8}
	\end{bmatrix}
	\\ 
	&\mathbf{x}_{1, 23}^{8,6}=\mathcal{H}^{-1}_{187}
	\begin{bmatrix}
		 A_{23,6} B_{13,6} C_{12,6}\\
		 F_{23,8} H_{23,6}\\
		 G_{23,6}
	\end{bmatrix}	
	\\
	&\mathbf{x}_{1, 23}^{6,9}=\mathcal{H}^{-1}_{167}
	\begin{bmatrix}
		 A_{23,9} B_{13,9} C_{12,9}\\
		 F_{23,9} I_{23,6}\\
		 G_{23,9}
	\end{bmatrix}	
	\\ 
	&\mathbf{x}_{1, 23}^{9,6}=\mathcal{H}^{-1}_{197}
	\begin{bmatrix}
		 A_{23,6} B_{13,6} C_{12,6}\\
		 F_{23,9} I_{23,6}\\
		 G_{23,6}
	\end{bmatrix}		
	\\
	&\mathbf{x}_{1, 23}^{7,8}=\mathcal{H}^{-1}_{179}
	\begin{bmatrix}
		 A_{23,8} B_{13,8} C_{12,8}\\
		 G_{23,8} H_{23,7}\\
		 I_{23,8}
	\end{bmatrix}		
	\\ 
	&\mathbf{x}_{1, 23}^{8,7}=\mathcal{H}^{-1}_{189}
	\begin{bmatrix}
		 A_{23,7} B_{13,7} C_{12,7}\\
		 G_{23,8} H_{23,7}\\
		 I_{23,7}
	\end{bmatrix}		
	\\
	&\mathbf{x}_{1, 23}^{7,9}=\mathcal{H}^{-1}_{178}
	\begin{bmatrix}
		 A_{23,9} B_{13,9} C_{12,9}\\
		 G_{23,9} I_{23,7}\\
		 H_{23,9}
	\end{bmatrix}
	\\ 
	&\mathbf{x}_{1, 23}^{9,7}=\mathcal{H}^{-1}_{196}
	\begin{bmatrix}
		 A_{23,7} B_{13,7} C_{12,7}\\
		 G_{23,9} I_{23,7}\\
		 F_{23,7}
	\end{bmatrix}
	\\
	&\mathbf{x}_{1, 23}^{8,9}=\mathcal{H}^{-1}_{186}
	\begin{bmatrix}
		 A_{23,9} B_{13,9} C_{12,9}\\
		 H_{23,9} I_{23,8}\\
		 F_{23,9}
	\end{bmatrix}
	\\ 
	&\mathbf{x}_{1, 23}^{9,8}=\mathcal{H}^{-1}_{196}
	\begin{bmatrix}
		 A_{23,8} B_{13,8} C_{12,8}\\
		 H_{23,9} I_{23,8}\\
		 F_{23,8}
	\end{bmatrix}
	.
\end{align*}

\subsection{Decoding Process}

The decoding process follows the decoding steps of Alg.~\ref{algDeliveryXORplusUncoded}. First, the members of set $\lambda$, i.e. the precoding-assisted users, will receive one of the $L$ messages, which can decode using their cached content.

Further, the users of set $\tau_{1}\cup \tau_{2}$ will receive a linear combination of all $L$ messages, which can decode using the acquired CSIT and their cached content.

As an example, we will look at the decoding of transmitted message $\mathbf{x}_{1, 23}^{6,7}$ at any intended user. First, we can see that the precoded users are $1,6,8$ and these users will receive
\begin{align*}
	y_{k\in\{1,6,8\}} = \mathbf{h}_{k}^{H}\mathbf{h}^{\perp}_{\{1,6,8\}\setminus\{k\}}
	\begin{cases}
		A_{23,7} B_{13,7} C_{12,7}, & k=1\\
		 F_{23,7} G_{23,6}, & k=6\\
		 H_{23,7}, & k=8
	\end{cases}
\end{align*}
where naturally any of these users can decode its intended subfile.

Then, for the remaining users (users $2,3,7$) the received signal takes the form
\begin{align}\nonumber
	y&_{k\in\{2,3,7\}}=\mathbf{h}_{k}^{H}\mathbf{h}^{\perp}_{\{6,8\}} A_{23,7} B_{13,7} C_{12,7}\\
	 &+ \mathbf{h}_{k}^{H}\mathbf{h}^{\perp}_{\{1,8\}}F_{23,7} G_{23,6}
	+ \mathbf{h}_{k}^{H}\mathbf{h}^{\perp}_{\{1,6\}} H_{23,7} + w_{k}.
\end{align}
We can easily see that each of these users can decode its desired subfile by caching-out any other interfering message.

\section{Conclusion and Final Remarks}

An interesting outcome (Theorem \ref{thmLfoldBoost}) is the fact that despite having abundant side information at a sizable number $K_1$ of receivers, going from 1 to $L$ antennas gives an $L$-fold DoF boost. This comes in obvious contrast to the cache-aided multiple antenna setting with only cache-aided users \cite{shariatpanahiMultiServerTransIT,naderializadehFundamentalTransIT,lampirisSubpacketizationJSAC,lampirisCsitISIT}, where adding antennas increases additively and not multiplicatively the DoF.

Furthermore, we showed that adding antennas can ameliorate and even remove the effects of cache-size asymmetries. This can be important in practical scenarios where $\gamma_{1}$ is expected to be small, which would then allow cache-aided users to boost the DoF performance of a large number ($\approx (L-1)/\gamma_{1}$) of cache-less users.  Finally we have seen in Theorem~\ref{the2typesPositiveGammas} the powerful (multiplicative) DoF effect that modest amounts of caching can have in uneven cache-size scenarios.

\begin{remark}
	The delay of the single-antenna system $(K_{1},\gamma_{1})$ BC is the same as the $L$-antenna MISO BC with extra added $K_{2}=(L-1)T_{K_1}\approx \frac{L-1}{\gamma_{1}}$ users.
\end{remark}

The above remark says that for every new antenna we add to the system, we can also treat an additional, fixed number of approximately $\frac{1}{\gamma_{1}}$ cache-less users \emph{without} increasing the overall delay.

\subsection*{Intuition on the Cache-less user Design} The algorithm, which is either optimal or near optimal, manages to achieve full coding gains by eliminating the previously encountered penalties of cache-size unevenness. Key to this, was the careful use of antenna-aided user separation.

In the case where cache-aided and cache-less users coexist, the scheme employs this separation in two ways. First, the scheme protects the cache-less users from the XOR and from each other.
Secondly, and most importantly, separation allowed $K_{1}\gamma_{1}$ cache-aided users to be able to have one subfile per file in common. Even though the employed cache-aided users can cache-out interfering subfiles, nevertheless any collection of $K_{1}\gamma_{1}+1$ of such users do not have any common subfile index cached, since a subfile is cached at exactly $K_{1}\gamma_{1}$ users. This obstacle was surmounted by Zero-Forcing the messages intended for the cache-less users away from one cache-aided user. This allowed for the aforementioned ability for the cache-aided users (that are not protected by precoding) to share a common subfile index, and thus, by design, to cache out all the subfiles intended for the cache-less users.
\appendices

\section{Proof of Theorem~\ref{theSingleStream}}\label{proofSingleStreamPenalty}

Toward proving Theorem~\ref{theSingleStream}, we adapt the approach of \cite{wanOptimalityITW2016}, to lower bound the delay for the case where, out of the $K$ users, only $K_1$ users have a cache. The bound will then also prove tight for all
\begin{align}
	L\geq \frac{K_2}{T_{K_{1}}}-1 = \frac{K_2(1+K_1\gamma)}{K_1(1-\gamma)}-1
\end{align}
as we will see in Appendix~\ref{appendix2}. 

The proof (for $L=1$) tracks closely steps\footnote{We note in advance that a naive adaptation of the approach in \cite{wanOptimalityITW2016}, where we would simply account for a reduced sum-cache constraint $K_1 M$ corresponding to a redundancy $t=\frac{K_1 M}{N}$, would yield a loose bound; for example when $L=1$, this naive bound would be $T\geq\frac{K-t}{t+1}$ which would then translate to $T\geq \frac{K_1(1-\gamma)}{1+K_1\gamma} + \frac{K_2}{1+K_1\gamma}$ which is loose as the optimal delay will turn out to be $T = \frac{K_1(1-\gamma)}{1+K_1\gamma} + K_2$.} from~\cite{wanOptimalityITW2016} which --- for the case of $K_1 = K$ (where all users have caches) --- employed index coding to bound the performance of coded caching.
Some of these steps are sketched here for the sake of completeness. Particular care is taken here to properly construct the bound's counting arguments in a way that accounts for the fact that specific symmetries that are essential to the approach in \cite{wanOptimalityITW2016}, do not directly hold here, simply because the set $\mathcal{K}_{1} = [K_1]$ of users that enjoy side information is only a subset of the users that request files.

We will begin with lower bounding, first for the case of $L=1$, the delay $T(\dv,\chi)$ for any generic caching-delivery strategy $\chi$ and any demand vector $\dv \in \mathcal{D}_{wc} \defeq \{\dv:  \ d_i \neq d_j, ~i,j\in[K],~i\neq j\}$ whose $K$ entries are all different.  In the following, we use $\Zc_i$ to denote the cache of each user $i$, where naturally $\Zc_i = \emptyset$ for $i\in \mathcal{K}_2 \defeq [K] \setminus \mathcal{K}_1$.

\paragraph*{Distinct caching problems and their corresponding index coding equivalents \label{sec:CachingProblem}} 

We first follow closely the approach in \cite{wanOptimalityITW2016} to describe the association between index coding and our specific caching scenario here. As in \cite{wanOptimalityITW2016}, each caching problem (defined by a demand vector $\dv \in \mathcal{D}_{wc}$) is converted into an index coding problem, by having each requested file $W^{d_i}$ split into $2^{K_1}$ disjoint subfiles $W^{d_i}_\Tau,\Tau\in 2^{[K_1]}$, where $\Tau\subset[K_1]$ indicates the set of users that have $W^{d_i}_\Tau$ cached.
Since no subfile of the form $W^{d_i}_\Tau, \; ~\Tau \ni i$ is requested, the index coding problem here is defined by $$K_1 2^{K_1-1} + K_2 2^{K_1}$$ requested subfiles, which form the nodes of the side-information graph $\mathcal{G}=(\mathcal{V}_{\mathcal{G}},\mathcal{E}_{\mathcal{G}})$, where $\mathcal{V}_{\mathcal{G}}$ is the set of vertices (each vertex/node representing a different demanded subfile $W^{d_i}_\Tau, \Tau\not\ni i$) and $\mathcal{E}_{\mathcal{G}}$ is the set of direct edges of the graph. We recall that an edge from node $W^{d_i}_\Tau$ to $W^{d_{i'}}_{\Tau'}$ exists if and only if $i'\in\Tau$.

As in \cite{wanOptimalityITW2016}, this allows us to lower bound $T(\dv,\chi)$ by using the index-coding converse from~\cite{li2017cooperative} which says that for a given $\dv,\chi$ --- with corresponding side information graph $\mathcal{G}_{\dv}=(\mathcal{V}_{\mathcal{G}},\mathcal{E}_{\mathcal{G}})$ with $\mathcal{V}_{\mathcal{G}}$ vertices/nodes and $\mathcal{E}_{\mathcal{G}}$ edges --- the delay is bounded as
\begin{equation}\label{eq:indexbound}T\geq \sum_{\smallV \in \mathcal{V_{J}}}|\smallV|
\end{equation}
for every acyclic induced subgraph $\mathcal{J}$ of $\mathcal{G}_{\dv}$, where $\mathcal{V}_{\mathcal{J}}$ denotes the set of nodes of the subgraph $\mathcal{J}$, and where $|\smallV|$ is the size of the message/subfile/node $\smallV$.

The following describes the acyclic graphs, and also directly shows that these remain acyclic after they are enlarged to account for the content requested by the cache-less users.
In the following we will consider permutations $\sigma\in S_{K_1}$ from the symmetric group $S_{K_1}$, and, for a given demand vector $\dv$, we will use $\mathcal{A}_{\dv} \defeq \cup_{i \in [K] \setminus [K_1]} W^{d_i}$ to denote the union of all content in $\dv$ that is requested by the users in $[K] \setminus [K_1]$.

\begin{lemma}\label{lem:cons_acyclic}
For any $\dv$ and any $\sigma\in S_{K_1}$, and for an acyclic subgraph $\mathcal{J}_{\dv,\sigma}$ of $\mathcal{G}_{\dv}$, is designed here to consist of all subfiles $\{W^{d_{\sigma(i)}}_{\Tau},~\forall i\in[K_1], \forall \Tau\subseteq [K_1]\setminus \{\sigma(1),\sigma(2),\dots,\sigma(i)\} \}$, then the enlarged graph $\mathcal{J}_{\dv,\sigma} \cup \mathcal{A}_{\dv} $ is also acyclic.
\end{lemma}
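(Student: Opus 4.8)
The plan is to work directly with the directed side-information graph $\mathcal{G}_{\dv}$ and to exploit the single structural feature that sets the cache-less users apart: no subfile is ever stored at them. Recall the edge convention, namely that an edge runs from $W^{d_i}_{\Tau}$ to $W^{d_{i'}}_{\Tau'}$ precisely when $i'\in\Tau$, i.e.\ when the user requesting the head node caches the tail node. Since every subscript $\Tau$ is, by construction, a subset of the cache-aided index set $[K_1]$, and since every cache-less user $k\in\mathcal{K}_{2}$ satisfies $k\notin[K_1]$, the condition $k\in\Tau$ can never be met. Consequently every node of $\mathcal{A}_{\dv}$ (the subfiles demanded by the cache-less users) has in-degree zero in $\mathcal{G}_{\dv}$, hence also in any induced subgraph. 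This is the observation that does all the work.

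First I would re-establish, for completeness, that $\mathcal{J}_{\dv,\sigma}$ itself is acyclic. Each of its nodes carries a request index $\sigma(i)$ together with a subscript $\Tau\subseteq[K_1]\setminus\{\sigma(1),\dots,\sigma(i)\}$. An internal edge from $W^{d_{\sigma(i)}}_{\Tau}$ to $W^{d_{\sigma(j)}}_{\Tau'}$ requires $\sigma(j)\in\Tau$, which by the defining constraint on $\Tau$ forces $j>i$. Ordering the nodes by the position $i$ of their request index in $\sigma$ therefore yields a topological order, so $\mathcal{J}_{\dv,\sigma}$ contains no directed cycle.

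It then remains to argue that appending $\mathcal{A}_{\dv}$ introduces no cycle. Passing to the induced subgraph on the vertices of $\mathcal{J}_{\dv,\sigma}$ together with those of $\mathcal{A}_{\dv}$ leaves all edges among the original $\mathcal{J}_{\dv,\sigma}$-nodes unchanged, so I would argue by contradiction: suppose the enlarged graph contained a directed cycle. Every vertex on a directed cycle must have in-degree at least one within the cycle; but the nodes of $\mathcal{A}_{\dv}$ have in-degree zero in the whole graph, so none of them can lie on the cycle. The cycle would then be confined to $\mathcal{J}_{\dv,\sigma}$, contradicting the acyclicity just shown. Hence $\mathcal{J}_{\dv,\sigma}\cup\mathcal{A}_{\dv}$ is acyclic.

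The argument is short precisely because the asymmetry introduced by the cache-less users happens to point in the favourable direction: their nodes are pure sources. The only place that demands care, and the step I would treat as the main obstacle, is keeping the edge-orientation convention consistent (the head/tail roles versus the $i'\in\Tau$ rule) so that one genuinely concludes ``cache-less nodes are sources'' rather than sinks; reversing this would invalidate the whole reduction. Everything else is a routine topological-sorting observation layered on the construction inherited from \cite{wanOptimalityITW2016}.
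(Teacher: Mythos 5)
Your proof is correct and follows essentially the same route as the paper's: the decisive observation in both is that every cache subscript $\Tau$ is a subset of $[K_1]$, so no edge can terminate at a node requested by a cache-less user, which forces any putative cycle in the enlarged graph to live entirely inside $\mathcal{J}_{\dv,\sigma}$. The only cosmetic differences are that you spell out the topological-order argument for the acyclicity of $\mathcal{J}_{\dv,\sigma}$ itself, which the paper simply imports from \cite[Lemma 1]{wanOptimalityITW2016}, and that you make the cache-less step explicit as an in-degree-zero (source-node) argument.
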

\vspace{3pt}\emph{Proof.} The proof that the subgraph $\mathcal{J}_{\dv,\sigma}$ is acyclic is direct from \cite[Lemma 1]{wanOptimalityITW2016}. The proof that $\mathcal{J}_{\dv,\sigma} \cup \mathcal{A}_{\dv} $ is also an acyclic graph, i.e., that the addition (on the original $\mathcal{J}_{\dv,\sigma}$) of all the nodes corresponding to $\mathcal{A}_{\dv}$ does not induce any cycles, follows by first recalling that a directed edge from node $W^{d_i}_\Tau$ to $W^{d_{i'}}_{\Tau'}$ exists if and only if $i'\in\Tau$, which thus tells us that an edge cannot be drawn from any node representing content from~$\mathcal{A}_{\dv}$, because any cache-less user $i \in {K} \setminus [K_1]$ cannot belong to any such $\Tau$ simply because $\Tau \subset [K_1]$.\vspace{3pt} \qed

Given the acyclic subgraph $\mathcal{J}_{\dv,\sigma} \cup \mathcal{A}_{\dv} $, we combine Lemma~\ref{lem:cons_acyclic} with \eqref{eq:indexbound} to get
\begin{equation}
T(\dv,\chi)\geq T^{LB}(\sigma,\dv,\chi)
\end{equation}
where
\begin{align}
&T^{LB}(\sigma,\dv,\chi) \defeq \sum_{\smallV \in \mathcal{V_{\mathcal{J}_{\dv,\sigma}}}\cup \mathcal{A}_{\dv}}|\smallV| \nonumber\\
& =  \!\!\!\!\! \sum_{\Tau\subseteq [K_1]\setminus \{\sigma(1)\}}|W^{\boldsymbol{d_{\sigma(1)}}}_{\Tau}| + \!\!\!\! \sum_{\Tau\subseteq [K_1]\setminus \{\sigma(1),\sigma(2)\}}   \!\!\!\! \!\!\!\! |W^{\boldsymbol{d_{\sigma(2)}}}_{\Tau}|+\dots  
\nonumber\\
&+ \sum_{\Tau\subseteq [K_1]\setminus \{\sigma(1),\dots,\sigma(K_1)\}}|W^{\boldsymbol{d_{\sigma(K_1)}}}_{\Tau}|
+|\mathcal{A}_{\dv}| . \label{eq:TLB}
\end{align}
Then, as in \cite{wanOptimalityITW2016}, we average over worst-case demands to get
\begin{align}
T^*&\defeq \min_{\chi} \max_{\dv \in \mathcal{D}_{W_c}} T(\dv,\chi)
\nonumber \\ &
\geq \min_{\chi} \max_{\dv \in \mathcal{D}_{W_c}} \max_{\sigma \in S_{K_1}} T^{LB}(\sigma,\dv,\chi)\nonumber \\
&\geq \min_{\chi} \frac{1}{|\mathcal{D}_{W_c}|}\frac{1}{|S_{K_1}|}  \sum_{\sigma \in S_{K_1}} \sum_{\dv \in \mathcal{D}_{W_c}} T^{LB}(\sigma,\dv,\chi)
\nonumber \\ &
\geq \min_{\chi} \frac{1}{ P(N,K) K_1! }  \sum_{\sigma \in S_{K_1}} \sum_{\dv \in \mathcal{D}_{W_c}} T^{LB}(\sigma,\dv,\chi)\label{eq:alternativedefinitionofT}
\end{align}
where in the above we use $P(N,K) \defeq \frac{N!}{(N-K)!}$.

Rewriting the summation in \eqref{eq:alternativedefinitionofT}, we get
\begin{align}\nonumber
&\sum_{\sigma \in S_{K_1}} \sum_{\dv \in \mathcal{D}_{W_c}} T^{LB}(\sigma,\dv,\chi) =  \\
&\sum_{i=0}^{K_1}\sum_{n\in[N]}\sum_{\Tau\subseteq[K_1]:|\Tau|=i} \!\!\!\! |W^n_{\Tau}| \cdot \!\!\!\! \underbrace{\sum_{\sigma \in S_{K_1}} \sum_{\dv \in \mathcal{D}_{W_c}} \!\!\!\! \mathds{1}_{\mathcal{V}_{\mathcal{J}_{\dv,\sigma}}}(W^n_{\Tau})}_{\defeq Q_{i}(W^n_\Tau)} + |\mathcal{A}_{\dv}| \label{eq:longinequality} 
\end{align}
where $\mathcal{V}_{\mathcal{J}_{\dv,\sigma}}$ is the set of vertices in the acyclic component subgraph $\mathcal{J}_{\dv,\sigma}$ for a given $\dv,\sigma$ pair, and where $\mathds{1}_{\mathcal{V}_{\mathcal{J}_{\dv,\sigma}}}(W^n_{\Tau})$ denotes the indicator function which takes the value of 1 only if $W^n_{\Tau} \subset \mathcal{V}_{\mathcal{J}_{\dv,\sigma}}$, else it is set to zero.

\paragraph*{Counting arguments accounting for cache-less users \label{sec:countingArguments}}
Our aim is to count the number of times, $Q_{i}(W^n_\Tau)$, that any specific subfile $W^n_\Tau$ appears in the summation in~\eqref{eq:longinequality}. To do this, we draw from the counting arguments in \cite[Section VII-C]{parrinello2018coded} which derives $Q_{i}(W^n_\Tau)$ for the case where $K$ users share $\Lambda \leq K$ caches, where each cache $r\subset [\Lambda]$ serves $\Lambda_r$ users. Adapting these steps\footnote{The following expression could not have been derived, had we simply substituted $K$ for $K_1$, in the corresponding $Q_i$ expression in \cite{wanOptimalityITW2016}. Such a naive approach would have essentially corresponded to treating the cache-less and cache-aided cases separately, and would not have allowed us to guarantee, among other things, that both cache-less and cache-aided users request different files.} in \cite[Section VII-C]{parrinello2018coded} gives that
\begin{align}\nonumber
Q_{i} &= Q_{i}(W^n_\Tau)\dfn \sum_{\sigma \in S_{K_1}} \sum_{\dv \in \mathcal{D}_{W_c}} \mathds{1}_{\mathcal{V}_{\mathcal{J}_{\dv,\sigma}}}(W^n_{\Tau}) \\ \nonumber
=&{N-1 \choose K-1}\sum_{r=1}^{K_1}P(K_1-i-1,r-1)(K_1-r)! \times\\
& \times (K-1)!(K_1-1)! (K_1-i).
\label{eq:Qi}
\end{align}

Setting $x_i\dfn\sum_{n\in[N]}\sum_{\Tau\subseteq[K_1]:|\Tau|=i}|W^n_{\Tau}|$ and recalling that
\begin{equation}\label{eq:sumfiles}
N=\sum_{i=0}^{K_1}x_i=\sum_{i=0}^{K_1}\sum_{n\in[N]}\sum_{\Tau\subseteq[K_1]:|\Tau|=i}|W^n_{\Tau}|
\end{equation}
we combine \eqref{eq:alternativedefinitionofT}, \eqref{eq:longinequality} and \eqref{eq:Qi}, 
to get
\begin{equation}\label{eq:compacteq}
T \geq \sum_{i=0}^{K_1}\frac{Q_{i}}{P(N,K)K_1!}x_{i}.
\end{equation}

We now resume counting to calculate $\frac{Q_{i}}{\Lambda!P(N,K)}$ for each $i = 0,1,\dots,K_1$.
Applying~\eqref{eq:Qi}, we see that
\begin{align} \label{eq:finanumber}		\nonumber
&\frac{Q_{i}}{K_{1} ! P(N,K)} =\frac{(N-1)!(N-K)!}{(K-1)!(N-K)!K_{1}!N!}  \times
\\ &
\times \sum_{r=1}^{K_{1}}(K-1)!(K_{1}-i)P(K_{1}-i-1,r-1)(K_{1}-r)!\nonumber
\\
&=\frac{1}{K_{1}!N}\sum_{r=1}^{K_{1}}(K_{1} -i)P(K_{1} -i-1,r-1)(K_{1}-r)!\nonumber\\
&=\frac{1}{K_{1} !N}\sum_{r=1}^{K_{1}} \frac{(K_{1} -i)(K_{1} -i-1)!(K_{1} -r)!}{(K_{1} -i-r)!}\nonumber
\\ &
=\frac{1}{K_{1} !N}\sum_{r=1}^{K_{1}}\frac{(K_{1} -i)!(K_{1} -r)!}{(K_{1} -i-r)!}\nonumber\\
&=\frac{1}{N}\sum_{r=1}^{K_{1}}\frac{(K_{1} -i)!(K_{1} -r)!i!}{K_{1} !(K_{1} -i-r)!i!}\nonumber
\\&
=\frac{1}{N}\sum_{r=1}^{K_{1}}\frac{{K_{1} -r\choose i}}{{K_1 \choose i}}=\frac{{K_{1}\choose i+1}}{{K_1 \choose i}N}=\frac{K_1-i}{(i+1)N}.
\end{align}

Now substituting \eqref{eq:finanumber} into \eqref{eq:compacteq}, we get that
\begin{align}
T(\chi)&\geq \sum_{i=0}^{K_1}\frac{K_1-i}{(i+1)N} x_{i} + \frac{K_{1} !P(N,K)}{K_{1} !P(N,K)} \underbrace{|\mathcal{A}_{\dv}|}_{K_2} \label{eq:LBwithxi_2}
\end{align}
where the use of the fraction $\frac{K_{1} !P(N,K)}{K_{1} !P(N,K)} = 1$ is meant to remind us the number of times acyclic graphs corresponding to $\mathcal{A}_{\dv}$ were invoked in the summation in~\eqref{eq:longinequality}, and where we also note that the expression above follows from the fact that all $\dv\in \mathcal{D}_{wc}$ force $|\mathcal{A}_{\dv}|  = K_2 $.

\paragraph*{Optimization\label{sec:optimization}}
At this point we observe that the crucial constant $\frac{K_1-i}{(i+1)N}$ derived for the part of the subgraph corresponding to cache-aided users, matches exactly the number $\frac{K_1-i}{(i+1)N}$ derived in \cite{wanOptimalityITW2016} for the $K=K_1$ case where all users can have a cache. Consequently, under the same file-size constraint given in \eqref{eq:sumfiles}, and given the current cache-size constraint $\sum_{i=0}^{K_1}i \cdot x_{i}\leq  K_1 M $, the expression in~\eqref{eq:LBwithxi_2} serves as a lower bound on the delay of scheme $\chi$ whose cache placement implies the set $\{x_i\}$.

Then, following the exact minimization steps in \cite[Proof of Lemma 2]{yuFactorOf2TransIT2019}, we get
\begin{equation}\label{eq:optimization1}
T(\chi)\geq \frac{K_1(1-\gamma)}{1+K_1\gamma} + K_2
\end{equation}
for integer $K_1\gamma$, whereas for all other values of $K_1\gamma$, this is extended to its convex lower envelop.

This concludes lower bounding $\max_{\dv} \in \mathcal{D}_{wc} T(\dv,\chi)$, and thus --- given that the right hand side of \eqref{eq:optimization1} is independent of $\chi$ --- lower bounds the performance for any scheme $\chi$, hence concluding the proof of the converse for Theorem~\ref{theSingleStream} for the case of $L = 1$.

\section{Converse and gap to optimal for Theorem \ref{thm:general}}\label{proofGapCacheless}
Let us first consider the gap to optimal for the case of $K_{2} \geq (L-1)T_{K_{1}}$.

We have seen that when $K_{2}=\alpha(L-1)T_{K_{1}},~\alpha\geq 1$, the achievable delay in~\eqref{eqGeneralResult} takes the form
\begin{align}\label{eq:time1}
    T_{L}(K_{1},\gamma_{1},K_{2},\gamma_{2})&=T_{K_{1}}+\frac{K_{2}-(L-1)T_{K_{1}}}{L}\\
    &=T_{K_{1}}+(\alpha-1)\frac{L-1}{L}T_{K_{1}} \\
	&=\frac{ T_{K_{1}} }{L}(\alpha L-\alpha+1).
\end{align}
For a lower bound on the minimum possible delay, we use
	\begin{align}\label{eq:time2}
		T^{\star}=\frac{\min\{K_{2},L\}}{L}=\min\left\{1,\frac{\alpha(L-1) T_{K_{1}} }{L} \right\}
	\end{align}
corresponding to the optimal delay required to satisfy only the cache-less users.
A quick calculation of the ratio between \eqref{eq:time1} and \eqref{eq:time2}, bounds the gap as
\begin{align}
	G=\frac{\alpha L\!-\!\alpha+1}{\alpha L -\alpha}=1+\frac{1}{\alpha(L-1)}\le 2.
\end{align}
When $K_{2}<L$, then $\alpha(L-1)T_{K_{1}}<L$, which again gives
\begin{align}
	G=\frac{ T_{K_{1}} } {L}(\alpha L-\alpha+1)<\frac{T_{K_{1}}(\alpha L-\alpha-1)}{\alpha(L-1)T_{K_{1}} }\le2.
\end{align}

For the case of $K_{2}=\alpha (L-1)T_{K_{1}},~~\alpha\le 1$, the lower bound takes the form
\begin{equation} \label{eq:outerBoundSmall}
	T_{L}(K_{1},\gamma_{1},K_{2},\gamma_{2})\geq \max\left\{\frac{\min\{L,K_{2}\}}{L},~\frac{1}{2}\frac{K_{1}(1-\gamma_1)}{K_{1}\gamma_1+L}\right\}
\end{equation}
where the first term corresponds to the optimal performance of an `easier' system where all the cache-aided users are removed, and where the second term corresponds to an easier system where all cache-less users are removed, and where --- for this latter type of system --- we know from \cite{naderializadehFundamentalTransIT} that treating $K_1\gamma_1+L$ users at a time is at most a factor of 2 from optimal, under the assumptions of linear and one-shot schemes.
Combining \eqref{eq:outerBoundSmall} with the achievable
\begin{align}
	T_{L}(K_{1},\gamma_{1},K_{2},\gamma_{2})=\frac{K_1+K_1(1-\gamma_{1})}{K_{1}\gamma_{1}+L}
\end{align}
from \eqref{eqGeneralResult}, yields a gap of
\begin{align*}
	G=\frac{\frac{K_{2}+K_{1}(1-\gamma_{1})}{K_{1}\gamma_{1}+L}}{\max\left\{\frac{K_{2}}{L},\frac{1}{2}\frac{K_{1}(1-\gamma_{1})}{K_{1}\gamma_{1}+L}\right\}}.
\end{align*}
To bound this gap, note that if $\frac{K_{2}}{L}>\frac{1}{2}\frac{K_{1}(1-\gamma_{1})}{K_{1}\gamma_{1}+L}$ then we know from before that
		\begin{align}\nonumber
			G=\frac{\frac{K_{2}+K_{1}(1-\gamma_1)}{K_{1}\gamma_1+L}}{\frac{K_{2}}{L}}=\frac{L}{K_{1}\gamma_{1}}+\frac{\frac{K_{1}(1-\gamma_{1})}{K_{1}\gamma_1+L}}{\frac{K_{2}}{L}}
			\le 1+2.
		\end{align}

Similarly when $\frac{K_{2}}{L}<\frac{1}{2}\frac{K_{1}(1-\gamma_1)}{K_{1}\gamma_1+L}$, the gap is bounded as
		\begin{align*}
			G=\frac{\frac{K_{2}+K_{1}(1-\gamma_1)}{K_{1}\gamma_1+L}}{\frac{1}{2}\frac{K_{1}(1-\gamma_1)}{K_{1}\gamma_1+L}}=\frac{\frac{K_{2}}{K_{1}\gamma_1+L}}{\frac{1}{2}\frac{K_{1}(1-\gamma_1)}{K_{1}\gamma_1+L}}+2\le 3
		\end{align*}
where the last step considers that $\frac{K_{2}}{L}<\frac{1}{2}\frac{K_{1}(1-\gamma_{1})}{K_{1}\gamma_{1}+L}$.

This concludes the proof of Theorem~\ref{thm:general}. \ \ \ \ \ \ \ \ \ \ \ \ \ \ \ \ \ \ $\square$

\section{Proof of Theorem \ref{thmLfoldBoost}}\label{appendix2}

In the considered MISO BC\footnote{As mentioned above, this system shares the same fundamental properties with the Interference Channel with cache-aided transmitters and, also, with the wired multi-server setting, thus the following proof is applicable to those settings as well.} setting, where {${\ell_{1},\ell_{2} }\in \{0,1, ..., L\}$, while $\ell_1+\ell_2=L$,} correspond to the maximum number {of streams dedicated to the cache-aided and cache-less users respectively}, we have the following trivial bound, under the assumption of uncoded placement
\begin{align}
	D_{L}(K_{1},\gamma_{1},K_{2},\gamma_{2}) \le  {\ell_{1}}\cdot (K_{1}\gamma_{1}+1) +{ \ell_{2}}= L +{\ell_{1} }K_{1}\gamma_{1}
\end{align}

In other words, each stream can provide either one DoF to a cache-less user or sum-DoF of $K_{1}\gamma_{1}+1$ to some cache-aided users. Thus, the minimum transmission time, for a specific set of parameters can be calculated by optimizing variable ${\ell_{1}}$ as follows
\begin{align}\label{eqOptStreams}
	T^{\star} \ge \min_{ {\ell_{1}} \in(0,L)}\max \left\{  \frac{K_{1}(1-\gamma_{1})}{ {\ell_{1}}(1+K_{1}\gamma_{1})}, \frac{K_{2}}{L-{\ell_{1} }}	.	\right\}
\end{align}

It is obvious that~\eqref{eqOptStreams} is minimized when the two quantities are equal, since both are continuous and one is increasing and the other is decreasing. Thus the point that achieves $T^{\star}$ is for ${\ell_{1}}=\frac{L}{\tilde{L}}$, resulting in the optimal delivery time of
\begin{align}
	T^{\star}=T_{K_{1}} \frac{\tilde{L}}{L}
\end{align}
which is the delivery time in~\eqref{eqGeneralResult}.

\section{A new Cache-Aided Multi-Antenna Delivery Algorithm}\label{secNewMultiantennaScheme}

In this section we present a new multi-antenna coded caching algorithm. The presentation is done for the general $L$-antenna MISO BC channel with $K$ cache-aided users, where each is equipped with a cache of normalized size $\gamma\in(0,1)$.

The main idea behind the algorithm is to transmit, in each slot, $K\gamma+L$ subfiles, again in an information vector with $L$ messages. We achieve this by creating an $L$-length vector which is further multiplied by a ZF precoding matrix. The entries of the vector consist of one XOR, comprized of $K\gamma+1$ subfiles (created exactly as in the algorithm of \cite{maddah2014fundamental}), and $L-1$ uncoded subfiles. We continue with the placement and delivery phases.

\subsection{Cache Placement}

Initially, each file is divided into $S=\binom{K}{K\gamma}$ subpackets, which are further split into $K\gamma+L$ smaller packets. We will assume that $T_{1}(K,\gamma)=\frac{K(1-\gamma)}{1+K\gamma}$ is an integer, while extending the scheme to non-integer values requires a slightly increased subpacketization. Users' caches are filled according to
\begin{align}\nonumber
	\mathcal{Z}_{k\in[K]}=\big\{ W^{n,\phi}_{\tau}: &\tau\subset[K], |\tau|=K\gamma, k\in\tau,\\
						&\forall\phi\in[K\gamma+L], \forall n\in[N]\big\}.
\end{align}

The purpose of index $\phi$ is to guarantee the delivery of ``fresh'' information, a total of $K\gamma+L$ subfiles for each associated index $\tau$. We will refrain from using this index in the following algorithm, in order to keep the notation clearer, but we will show that data from each subfile is transmitted $K\gamma+L$ times, thus showing that each individual $\phi,\tau$ pair will indeed be transmitted.

\subsection{Delivery Phase}

In each delivery slot, as discussed above, we will create a vector of size $L\times1$, where one of its entries will be a XOR comprized of $K\gamma+1$ subfiles, while the remaining $L-1$ entries will be uncoded subfiles. Then, the vector will be multiplied by an $L\times L$ precoder matrix, which is calculated as the normalized inverse of the channel between the $L$-antenna transmitter and a subset of the $K\gamma+L$ users, namely one of the users of the XOR and the $L-1$ users that will be the recipients of the uncoded messages. The process is written in the form of a pseudo-code in Alg.~\ref{algDeliveryXORplusUncoded} and will be further described in the following paragraph. We remind that $\mathcal{H}^{-1}_{\lambda}$ denotes the normalized inverse of the channel matrix formed between the $L$ antenna transmitter and the users in set $\lambda$, while $\beta_{\tau,k}\subseteq[K]\setminus\tau$ is a set of $L-1$ elements, which are selected to be the elements following the element $k\in[K]\setminus\tau$.

\begin{algorithm}[th!]\caption{Delivery Phase}\label{algDeliveryXORplusUncoded}
\For{all $\chi\subseteq[K], |\chi|=K\gamma+1$ (pick XOR)}{
	\For{all $s\in\chi$ (pick precoded user)}{
		Set: $\tau=\chi\setminus\{s\}$\\
		Set: $\lambda=\{s\}\cup\beta_{\tau,s}$.\\
		Transmit:
		\begin{align}
		\mathbf{x}_{s,\tau}=
		\begingroup
			\renewcommand*{\arraystretch}{1.7}
			\mathcal{H}^{-1}_{\lambda}\cdot
			\begin{bmatrix}
				\bigoplus_{k\in\chi} W^{d_{k}}_{\chi\setminus\{k\}}\\
				W^{d_{\beta_{\tau,s}(1)}}_{\tau}\\
				\vdots\\
				W^{d_{\beta_{\tau,s}(L-1)}}_{\tau}
			\end{bmatrix}
		\endgroup
		\end{align}
	}
}
\end{algorithm}

\paragraph*{Details of Algorithm \ref{algDeliveryXORplusUncoded}} The algorithm begins by selecting a subset $\chi$ of the users of size $K\gamma+1$. For these users, the algorithm will form a XOR in the same way as does the algorithm of \cite{maddah2014fundamental}. Then, the algorithm selects one user, $s$,  from the users of set $\chi$, where this user will be helped by precoding. It is easy to see that the subfile index that this user will receive is $\chi\setminus\{s\}=\tau$.

Further, the remaining $L-1$ users that are scheduled to receive from the transmitted vector, need to be selected. These users are described by set $\beta_{\tau,s}$, which is calculated by finding the $L-1$ consecutive elements of set $[K]\setminus\tau$ after element $s$. For example, if $\chi=\{1,2,3\}$, $K=5$, $L=2$ and $s=1$, then $[K]\setminus\tau=\{1,4,5\}$ thus, $\beta_{\tau,s}=\{4\}$, as $4$ comes right after element $s=1$. The users of set $\tau\cup \{s\} \cup \beta_{\tau,s}$ are the $L+K\gamma$ users that will receive a subfile in this slot.

For the above selected users, the algorithm creates an $L\times 1$ vector, where one of the elements is a XOR designed for the users in set $\chi$, while the remaining elements correspond to subfiles indexed with $\tau$ and intended for the users in set $\beta_{\tau, s}$.

Further, the algorithm forms the precoder matrix $\mathcal{H}_{\lambda}^{-1}$ such that it is the normalized inverse of the channel matrix between the $L$-antenna transmitter and the users in $\lambda=\{s\}\cup \beta_{\tau,s}$. Finally, the transmitted vector is created by multiplying the precoder matrix with the vector containing the messages.

\paragraph{Decoding Process} We begin with the users of set $\lambda$ i.e, the ``precoding-assisted'' users. Due to the design of the precoder, we can see that these users will receive either the XORed message (user $s$) or each of the uncoded messages to the respective user i.e.,
\begin{align}
	y_{k\in\lambda}=\mathbf{h}^{H}_{k}\mathbf{x}_{s,\tau} =
	\begin{dcases}
		\oplus_{k\in\chi} W^{d_{k}}_{\chi\setminus\{k\}}, & k=s\\
		W^{d_{k}}_{\tau}, & \text{else}
	\end{dcases}
\end{align}
where for simplicity we have removed the noise. It is easy to see that users in set $\beta_{\tau,s}$ will be assisted by precoding, thus will only ``see'' the uncoded subfile that they want. Further, user $s$ will receive XOR $X_{\chi}$ which can proceed to decode using its cached content.

On the other hand, users in set $\tau$ will be receiving a linear combination of all $L$ messages, which will proceed to decode using both CSIT knowledge and their cached subfiles. The received message at some user $k\in\tau$ takes the form
\begin{align}
	y_{k\in\tau}&=\mathbf{h}^{H}_{k}\mathbf{x}_{s,\tau}
	= \mathbf{h}^{H}_{k}\mathbf{h}^{\perp}_{\lambda\setminus \{s\} } X_{\chi} + \mathbf{h}^{H}_{k} \sum_{i=1}^{L-1}  \mathbf{h}^{\perp}_{\lambda\setminus\beta_{\chi,s}(i)} W^{d_{\beta_{\chi,s}(i)}}_{\tau}.\label{eqReceivedMessageAtCacheAidedUsers}
\end{align}

We can see that in~\eqref{eqReceivedMessageAtCacheAidedUsers}, the subfiles that are included in the summation term have all been cached by all receivers of set $\tau$, and as such they can be removed from the equation. What remains is XOR $X_{\chi}$ which, by design, is decodable by all users in $\tau$.

\begin{corollary}
	In Algorithm~\ref{algDeliveryXORplusUncoded}, each requested subfile $W_{\tau}^{d_{k}},k\in[K]$ is transmitted exactly $K\gamma+L$ times.
\end{corollary}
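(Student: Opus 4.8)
The plan is to fix an arbitrary transmitted subfile $W^{d_k}_{\tau}$ (with $k\in[K]$, $\tau\subseteq[K]$, $|\tau|=K\gamma$, and necessarily $k\notin\tau$, since every subfile index appearing in Alg.~\ref{algDeliveryXORplusUncoded} excludes its own requesting user) and to count its appearances by splitting them according to the only two roles a subfile can play in a transmitted vector: as a summand of the XOR $X_{\chi}$, or as one of the $L-1$ uncoded messages. The total count will then be the sum of these two contributions.

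First I would count the XOR contribution. The XOR $X_{\chi}=\bigoplus_{j\in\chi}W^{d_j}_{\chi\setminus\{j\}}$ contains $W^{d_k}_{\tau}$ if and only if $\chi=\tau\cup\{k\}$, which pins down a single set $\chi$ of size $K\gamma+1$. The key observation here is that the XOR $X_{\chi}$ depends only on $\chi$ and \emph{not} on the precoded user $s$; hence, when the outer loop reaches this unique $\chi$, the inner loop over $s\in\chi$ retransmits the same $X_{\chi}$ exactly $|\chi|=K\gamma+1$ times, each time carrying a fresh packet of $W^{d_k}_{\tau}$. This gives $K\gamma+1$ transmissions of $W^{d_k}_{\tau}$ in its XOR role.

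Next I would count the uncoded contribution. The subfile $W^{d_k}_{\tau}$ appears as an uncoded message in $\mathbf{x}_{s,\tau'}$ precisely when $\tau'=\tau$ (equivalently $\chi=\tau\cup\{s\}$, so $s\in[K]\setminus\tau$) and $k\in\beta_{\tau,s}$. Fixing $\tau$, I would then count for how many $s\in[K]\setminus\tau$ the element $k$ lands in $\beta_{\tau,s}$. Since $\beta_{\tau,s}$ is by definition the block of the $L-1$ elements cyclically following $s$ inside the ordered set $[K]\setminus\tau$, the membership $k\in\beta_{\tau,s}$ is equivalent to $s$ being one of the $L-1$ elements cyclically preceding $k$ in $[K]\setminus\tau$. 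Provided $|[K]\setminus\tau|=K-K\gamma\geq L$ (so that these $L-1$ predecessors are distinct and none coincides with $k$), there are exactly $L-1$ such values of $s$, yielding $L-1$ transmissions of $W^{d_k}_{\tau}$ in its uncoded role.

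Adding the two counts gives $(K\gamma+1)+(L-1)=K\gamma+L$, as claimed. To close, I would remark that because the hidden index $\phi$ is advanced on each transmission so as to always send a new packet of the subfile, these $K\gamma+L$ appearances deliver $K\gamma+L$ distinct packets, matching exactly the subpacketization of each subfile into $K\gamma+L$ smaller packets and thereby confirming that every $(\phi,\tau)$ piece is sent once. I expect the cyclic-counting step for $\beta_{\tau,s}$ to be the main obstacle: one must argue carefully, using the consecutive/wrap-around definition of $\beta$ together with the size condition $K-K\gamma\geq L$, that $k$ belongs to exactly $L-1$ of the sets $\{\beta_{\tau,s}\}_{s\in[K]\setminus\tau}$, with no over- or under-counting caused by wrap-around coincidences.
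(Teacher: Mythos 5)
Your proposal is correct and follows essentially the same route as the paper's proof: it decomposes the appearances of $W^{d_k}_{\tau}$ into the $K\gamma+1$ XOR transmissions (one per choice of precoded user $s\in\chi=\tau\cup\{k\}$) and the $L-1$ uncoded transmissions (counted via the cyclic/consecutive structure of $\beta_{\tau,s}$, which the paper formalizes with the modulo condition and the case split $K(1-\gamma)=L$ versus $K(1-\gamma)>L$). Your "cyclic predecessors of $k$" phrasing is just a cleaner restatement of that same counting step.
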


\begin{proof}
	Since each subfile ($W^{n}_{\tau}$) is divided into a total of $K\gamma+L$ smaller subfiles (cf. index $\phi$), in this section we aim to show that each of these subfiles will be transmitted exactly once i.e., that each $W_{\tau}^{d_{k}},k\in[K]$ appears in $K\gamma+L$ transmissions.
	
	We split the proof in two steps, where in the first we prove that any requested subfile will be transmitted $K\gamma+1$ times while combined with other subfiles to form the XOR message, while in the second we prove that any requested subfile is transmitted $L-1$ times while being a part of the uncoded messages.
	
	First, we can see that each XOR is transmitted a total of $K\gamma+1$ times, i.e. for each different $s\in\chi$ (Step $2$), which implies that requested subfile $W^{d_{k}}_{\tau}$ will be transmitted $K\gamma+1$ times as part of a XOR.
		
	Further, we can see that this subfile can be potentially transmitted, as part of the uncoded elements of the message, when $\chi$ is of the form $\chi=\tau\cup\{s\}$, while variable $s$ (Step $2$) takes values $s\in[K]\setminus\tau\setminus\{k\}$, thus $s$ can take $K-K\gamma-1$ different values.

	We can discern two cases, namely $K(1-\gamma)=L$ and $K(1-\gamma)>L$. In the first case where $ K-K\gamma=L$, it is clear that for every $s\in[K]\setminus\tau\setminus\{k\}$ then $\beta_{s,\tau}=[K] \setminus\tau\setminus\{s\}$, thus $k$ will be included in each transmission, which amounts to $L-1$ different subfiles.
	
	In the second case, where the size of set $[K]\setminus \tau$ is bigger than $L$, only a subset of the users will be selected every time to form sets $\beta_{s,\tau}, \forall s\in[K]\setminus\tau\setminus\{k\}$. Using $p_{k}\in\{1,2,..., K-K\gamma\}$, $k\in[K]\setminus\tau$ to denote the position of element $k$ in set $[K]\setminus\tau$, we can see that $k\in\beta_{s,\tau}$ if and only if for some $l\in\{2,...,L\}$, the following equality holds
\begin{align}\label{eqModuloConnection}
	p_{k}= ( p_{s} + l )\mod (K-K\gamma) -1.
\end{align}
The condition of~\eqref{eqModuloConnection} can be satisfied for exactly $L-1$ different values of $s$. Further, for any given $s$ it can only be satisfied by a single $l$, thus $k$ appears in exactly $L-1$ sets $\beta_{s,\tau}$, which completes the proof.\end{proof}

\section{Extension of the two type cache-aided scheme}\label{appExtensionTwoTypesNonIntegerStreams}

In this section we present an extension of Alg.~\ref{algDelivery2types} to accommodate any values $L_{1},L_{2}\in [1,L-1]$, such that $L_{1}+L_{2}=L$. The main premise is to increase the per-type subpacketization by some factor $d\in\mathbb{N}$, such that $d\cdot L_{1} \in\mathbb{N}, d\cdot L_{2} \in\mathbb{N}$. Hence the total subpacketization becomes
\begin{align}
	S=d^{2}(L_{1}+K_{1}\gamma_{1}) \binom{K_1}{K_{1}\gamma_{1}} (L_{2}+K_{2}\gamma_{2})\binom{K_{2}}{K_{2}\gamma_{2}}.
\end{align}

The new scheme works by repeating $d^2$ times Alg.~\ref{algDelivery2types}, with the difference that some transmissions will allocate $\lceil L_{1} \rceil$ streams to users of set $\mathcal{K}_{1}$ and at the same time will allocate $\lfloor L_{2} \rfloor$ streams to set $\mathcal{K}_{2}$ and in some transmissions will allocate $\lfloor L_{1} \rfloor$ streams to users of set $\mathcal{K}_{1}$ and at the same time will allocate $\lceil L_{2} \rceil$ streams to set $\mathcal{K}_{2}$.

This way, it allows the average allocation of $L_{1}$ and $L_{2}$ streams to each user type, which leads to the DoF $D_{L}(K_{1},\gamma_{1},K_{2},\gamma_{2})=L+K_{1}\gamma_{1}+K_{2}\gamma_{2}$.

{

\begin{example}	
	Let us consider the setting with parameters $K_{1} = K_{2} = 10$, $\gamma_{1} = \frac{2}{10}$, $\gamma_{2} = \frac{1}{10}$, and $L=6$. The optimal allocation of the spatial multiplexing resources is $L_1 = \frac{38}{17}$ and $L_2 = \frac{64}{17}$, according to \eqref{eqStreamsCalculation}.
	
	To accommodate for the non-integer $L_1, L_2$ we increase the subpacketization by a factor of $17^2$. By repeating Algorithm~\ref{algDelivery2types} a total of $17\times 13$ iterations with values $L_1' = 2$ and $L_{2}'=4$ and subsequently repeating Algorithm~\ref{algDelivery2types} a total of $17\times 4$ iterations with values $L_1'' = 3$ and $L_{2}''=3$ we can serve the demands of all users with a constant DoF of $K_{1}\gamma_1+K_{2}\gamma_{2}+L = 9$.

\end{example}

}

\bibliographystyle{ieeetr}

\enlargethispage{-1.2cm}

\end{document}